\pdfoutput=1
\documentclass[journal]{IEEEtran}

\usepackage[T1]{fontenc}
\usepackage[utf8]{inputenc}
\usepackage{amsmath,amssymb,amsfonts,mathtools}
\usepackage{amsthm}
\usepackage{booktabs}
\usepackage{array}
\usepackage{graphicx}
\usepackage{tikz}
\usetikzlibrary{arrows.meta,positioning,calc,fit,backgrounds,shapes.geometric,
                decorations.pathmorphing}
\usepackage{url}
\usepackage[hidelinks]{hyperref}

\newtheorem{theorem}{Theorem}
\newtheorem{corollary}{Corollary}
\newtheorem{proposition}{Proposition}
\newtheorem{definition}{Definition}
\newtheorem{assumption}{Assumption}
\theoremstyle{remark}
\newtheorem{remark}{Remark}

\newcommand{\Reach}{\mathrm{Reach}}
\newcommand{\Sset}{\mathcal{S}}
\newcommand{\Rset}{\mathcal{R}}
\newcommand{\Tset}{\mathcal{T}}
\newcommand{\Gam}{\Gamma}

\graphicspath{{figs/}}
\providecommand{\SimAThreeConfigs}{\ensuremath{14}}
\providecommand{\SimAThreeFails}{\ensuremath{5}}

\providecommand{\SimAThreeMinHoldActuators}{\ensuremath{3}}
\providecommand{\SimAThreeRankAtFail}{\ensuremath{4}}

\providecommand{\SimAttackQ}{\ensuremath{2}}

\providecommand{\SimBudgetExactFitted}{\ensuremath{3}}
\providecommand{\SimBudgetExactTrue}{\ensuremath{1}}
\providecommand{\SimBudgetFloorFitted}{\ensuremath{1}}
\providecommand{\SimBudgetFloorTrue}{\ensuremath{1}}
\providecommand{\SimCVImprovement}{\ensuremath{10}}

\providecommand{\SimClosedFallbackStep}{\ensuremath{5}}

\providecommand{\SimConfinedFloor}{\ensuremath{1.0\times 10^{-3}}}
\providecommand{\SimConfinedHelps}{\ensuremath{3}}

\providecommand{\SimConfinedQAgnostic}{\ensuremath{2}}
\providecommand{\SimConfinedQConfined}{\ensuremath{6}}
\providecommand{\SimConfinedRefNet}{\ensuremath{2}}
\providecommand{\SimConfinedRefQ}{\ensuremath{1}}
\providecommand{\SimConfinedRows}{\ensuremath{4}}
\providecommand{\SimConfinedTrustedMin}{\ensuremath{3}}
\providecommand{\SimConfinedUntrusted}{\ensuremath{6}}
\providecommand{\SimCorrectedProducts}{\ensuremath{0.0132, 0.0150, 0.0147, 0.0152}}

\providecommand{\SimDeltaMinCV}{\ensuremath{0.516}}
\providecommand{\SimDeltaMinCVCorrected}{\ensuremath{0.054}}

\providecommand{\SimDeltaMinProducts}{\ensuremath{0.0021, 0.0058, 0.0057, 0.0109}}

\providecommand{\SimDeltaMinRes}{\ensuremath{0.5}}
\providecommand{\SimDeltaMinSeedSpread}{\ensuremath{4}}

\providecommand{\SimEnsembleAggPositivePct}{\ensuremath{78}}
\providecommand{\SimEnsembleCellPositivePct}{\ensuremath{56}}
\providecommand{\SimEnsembleChainPct}{\ensuremath{0}}
\providecommand{\SimEnsembleConfigs}{\ensuremath{32}}
\providecommand{\SimEnsembleFamilies}{\ensuremath{3}}
\providecommand{\SimEnsembleGainAggMax}{\ensuremath{6}}
\providecommand{\SimEnsembleGainAggMedian}{\ensuremath{4}}
\providecommand{\SimEnsembleGainCellMedian}{\ensuremath{1}}
\providecommand{\SimEnsembleRandomPct}{\ensuremath{92}}

\providecommand{\SimGammaQHigh}{\ensuremath{0.987}}
\providecommand{\SimGammaQLow}{\ensuremath{0.715}}

\providecommand{\SimGammaSpan}{\ensuremath{9}}

\providecommand{\SimHiddenChannels}{\ensuremath{\{1, 2, 4, 5\}}}

\providecommand{\SimIdErrA}{\ensuremath{1.44\times 10^{-3}}}
\providecommand{\SimIdMarginAgreement}{\ensuremath{1.2}}
\providecommand{\SimIdSamples}{\ensuremath{400}}
\providecommand{\SimIdSpuriousK}{\ensuremath{3}}
\providecommand{\SimIdWorstSpurious}{\ensuremath{8.8\times 10^{-4}}}
\providecommand{\SimInflBad}{\ensuremath{1.0\times 10^{0}}}
\providecommand{\SimInflBudget}{\ensuremath{0.1288}}
\providecommand{\SimInflControls}{\ensuremath{4}}
\providecommand{\SimInflRecovers}{\ensuremath{yes}}

\providecommand{\SimLeakAtBoundary}{\ensuremath{0.061}}
\providecommand{\SimLeakPNinety}{\ensuremath{0.120}}
\providecommand{\SimLeakPTen}{\ensuremath{0.022}}
\providecommand{\SimLeakage}{\ensuremath{<10^{-12}}}

\providecommand{\SimMarginFitted}{\ensuremath{0.0766}}
\providecommand{\SimMarginFloor}{\ensuremath{1.0\times 10^{-3}}}
\providecommand{\SimMarginTrue}{\ensuremath{0.0757}}

\providecommand{\SimModelDraws}{\ensuremath{400}}
\providecommand{\SimModelErrHalfPct}{\ensuremath{0.5}}
\providecommand{\SimModelErrPct}{\ensuremath{1}}

\providecommand{\SimNLTOneHeld}{\ensuremath{2.0}}
\providecommand{\SimNLTOneLost}{\ensuremath{5.0}}

\providecommand{\SimNaiveFallbackStep}{\ensuremath{14}}
\providecommand{\SimNaivePeak}{\ensuremath{0.770}}

\providecommand{\SimNoiseFloor}{\ensuremath{0.02}}

\providecommand{\SimOpenEscapeStep}{\ensuremath{16}}
\providecommand{\SimOpenEstimatePeak}{\ensuremath{0.095}}
\providecommand{\SimOpenPeak}{\ensuremath{52.9}}
\providecommand{\SimOpenTrips}{\ensuremath{0}}
\providecommand{\SimOpenUnderstatement}{\ensuremath{65}}
\providecommand{\SimPlaceGreedySize}{\ensuremath{2}}
\providecommand{\SimPlaceOptSize}{\ensuremath{2}}
\providecommand{\SimPlaceRows}{\ensuremath{5}}
\providecommand{\SimPlaceSubPairs}{\ensuremath{172}}
\providecommand{\SimPlaceSubmodular}{\ensuremath{yes}}
\providecommand{\SimPlaceWorstRatio}{\ensuremath{1.00}}
\providecommand{\SimPlantCells}{\ensuremath{6}}
\providecommand{\SimPlantChannels}{\ensuremath{8}}
\providecommand{\SimProbDetect}{\ensuremath{0.92}}
\providecommand{\SimProbDetectHalf}{\ensuremath{0.76}}
\providecommand{\SimQStar}{\ensuremath{1}}
\providecommand{\SimQStarSafetyAggregate}{\ensuremath{3}}
\providecommand{\SimQStarSafetyOneCell}{\ensuremath{2}}
\providecommand{\SimQStarSafetyThreeAlt}{\ensuremath{1}}
\providecommand{\SimQStarTwoAdjacent}{\ensuremath{2}}
\providecommand{\SimQStarTwoSpread}{\ensuremath{1}}

\providecommand{\SimRho}{\ensuremath{1.08}}

\providecommand{\SimRhoThmOne}{\ensuremath{1.48}}

\providecommand{\SimRobustDiamRatio}{\ensuremath{1.000}}

\providecommand{\SimRobustGamma}{\ensuremath{0.913}}
\providecommand{\SimRobustInfConv}{\ensuremath{6.9}}
\providecommand{\SimRobustLimitAgrees}{\ensuremath{yes}}

\providecommand{\SimRobustTau}{\ensuremath{0.30}}
\providecommand{\SimRobustTauZero}{\ensuremath{0.20}}

\providecommand{\SimSafetyMaps}{\ensuremath{6}}
\providecommand{\SimSafetyWitnesses}{\ensuremath{3}}
\providecommand{\SimScaleAgree}{\ensuremath{yes}}
\providecommand{\SimScaleCompared}{\ensuremath{18}}
\providecommand{\SimScaleEnumSubsets}{\ensuremath{10^{149}}}
\providecommand{\SimScaleLargestCells}{\ensuremath{500}}
\providecommand{\SimScaleLargestChannels}{\ensuremath{503}}
\providecommand{\SimScaleLargestQ}{\ensuremath{125}}
\providecommand{\SimScaleLargestSeconds}{\ensuremath{183}}

\providecommand{\SimSetValuedFallbackStep}{\ensuremath{13}}
\providecommand{\SimSetValuedPeak}{\ensuremath{0.580}}

\providecommand{\SimSupportTrue}{\ensuremath{2}}

\providecommand{\SimTrustAgents}{\ensuremath{5}}
\providecommand{\SimTrustControlOk}{\ensuremath{yes}}
\providecommand{\SimTrustEscapeStep}{\ensuremath{16}}
\providecommand{\SimTrustMinWeight}{\ensuremath{0.48}}
\providecommand{\SimTrustNetChannels}{\ensuremath{0}}
\providecommand{\SimTrustPeak}{\ensuremath{35.1}}
\providecommand{\SimTrustWeightConsistent}{\ensuremath{1.00}}
\providecommand{\SimTrustWeightWrong}{\ensuremath{0.02}}

\providecommand{\SimUnpropEscapeStep}{\ensuremath{16}}

\providecommand{\SimXMax}{\ensuremath{1}}

\newif\ifArtifactReleased
\ArtifactReleasedfalse

\newcommand{\ArtifactRepoURL}{https://github.com/NeuroBazar-Inc/verified-floor}
\newcommand{\ArtifactArchiveDOI}{10.0000/UNASSIGNED}

\newcommand{\ArtifactLocation}{%
  \ifArtifactReleased
    The artifact is at \url{\ArtifactRepoURL} and is archived under
    DOI~\href{https://doi.org/\ArtifactArchiveDOI}{\ArtifactArchiveDOI},
    which is the version this paper's figures were produced from.
  \else
    The artifact is being prepared for public release and will be archived
    under a persistent identifier before publication. It is available from
    the author on request.
  \fi
}

\newcommand{\ArtifactTool}{%
  The assessment is also available as a command. Given a deployment graph,
  a measurement map and an actuation bound, it returns the tolerable
  adversary budget at a stated margin floor, the reaction ceiling in steps
  and in milliseconds, the minimum observation window and the lag it owes,
  the number of concurrent applications the loop may host, and whether the
  deployment has crossed the confinement threshold. Every quantity is
  printed with the assumption that produced it, and any figure that is a
  bound rather than a certificate says so.

  A suite of named reference deployments ships with it, each with its
  computed answers, so a proposal for a better decoder, a tighter bound or
  a different architecture has somewhere to report numbers. One entry has
  a healthy rank-test budget whose robust precondition nonetheless fails,
  which turns the caution about margin floors into something a reader can
  run into rather than read about.%
}

\begin{document}

\title{Runtime Assurance Under Measurement Attack: Necessary and
Sufficient Observability Conditions for Learned Control in Radio
Access Networks}

\author{Yasser~Al-Eryani,~\IEEEmembership{Member,~IEEE,}%
\thanks{Y. Al-Eryani is with NeuroBazar Inc., Canada
(e-mail: yasser.aleryani@neurobazar.ca).}%
}

\markboth{}%
{Al-Eryani: Runtime Assurance Under Measurement Attack}

\maketitle

\begin{abstract}
Runtime assurance pairs a verified fallback with an untrusted controller
and a switching monitor, and is the leading route to admitting learned
policies into safety-relevant network control. Its guarantee rests on a
condition its statements assume rather than require: that the monitor's
estimation error is zero, or at worst stochastic with a characterisable
rate. In a mobile network many of its measurements originate at untrusted
endpoints, where the error is neither. We call the missing condition the
\emph{assurance precondition}, and prove it: at zero measurement noise a
monitor that interrupts the learned controller only once the state has left
its trigger set keeps the plant safe against an adversary controlling $q$
channels if and only if the plant is $2q$-sparse observable with respect to
the safety-relevant output.
That is functional observability quantified over attack supports, and it is
strictly weaker than full-state observability: on our topology a per-cell
outage condition doubles the tolerable budget and one stated over total
offered load triples it. The converse is a constructive attack rendering a safe and
an unsafe trajectory observationally identical, forcing every monitor on
those channels, not a particular detector, to give up safety or
availability, and it does not depend on the noise level. The quantifier over supports is also weaker than it looks:
the trust split is static and known before deployment, so confining the
adversary to one family yields a \emph{confinement threshold}, above which
the trusted channels resolve the state alone and every untrusted channel
may be corrupt at once. Three uninfluenceable counters cross it here,
lifting the budget from two to six and making placement the dominant lever.
Below the precondition the monitor faces a frontier, not a dichotomy; a
set-valued monitor is optimal, and it is also what carries sufficiency into
non-zero noise, at the residual test's false-rejection rate. An adversary
reaching the trigger radius owns the switch regardless. The same argument
bounds any monitor reading agent messages rather than measurements.
Finally, a rank test decides the budget and identification does not
preserve it, so a published budget must name its margin floor.
\end{abstract}
\begin{IEEEkeywords}
Runtime assurance, secure state estimation, sparse observability,
adversarial machine learning, safety-critical control, O-RAN, RAN
intelligent controller.
\end{IEEEkeywords}

\begin{figure*}[!t]
\centering
\resizebox{\textwidth}{!}{%
\begin{tikzpicture}[
  font=\small,
  blk/.style = {draw, rounded corners=2pt, align=center, inner sep=4pt,
                text width=26mm, minimum height=9mm, fill=white},
  sml/.style = {draw, rounded corners=2pt, align=center, inner sep=3pt,
                text width=27mm, minimum height=8mm, fill=white},
  ar/.style  = {-{Latex[length=2mm,width=1.6mm]}, thick},
  dar/.style = {-{Latex[length=2mm,width=1.6mm]}, thick, dashed},
  lbl/.style = {inner sep=1.5pt, font=\scriptsize},
]

\node[blk, text width=30mm, minimum height=24mm] (plant) at (0,0) {};
\node[lbl, anchor=north] at ([yshift=-1.5mm]plant.north) {\textbf{Plant}};
\node[lbl, anchor=north] at ([yshift=-5.0mm]plant.north)
     {$x_{k+1}=Ax_k+Bu_k+w_k$};
\begin{scope}[shift={($(plant.center)+(0,-0.22)$)}]
  \foreach \i in {0,...,5}
    \node[circle, draw, fill=black!8, inner sep=0.9pt] (c\i)
          at ({90-60*\i}:0.42) {};
  \foreach \i/\j in {0/1,1/2,2/3,3/4,4/5,5/0}
    \draw[gray, line width=0.4pt] (c\i) -- (c\j);
\end{scope}
\node[lbl, anchor=south] at ([yshift=1.2mm]plant.south)
     {$n$ cells, coupling $L$};

\node[sml, fill=red!5]  (ue)  at (4.6, 1.60)
     {UE-derived channels\\[-2pt]{\scriptsize $C_{\mathrm{UE}}x_k+v_k$}};
\node[sml, fill=blue!5] (net) at (4.6,-1.60)
     {network-derived\\[-2pt]{\scriptsize $C_{\mathrm{NET}}x_k+v_k$}};

\draw[ar] (plant.east) -- ++(0.45,0) |- (ue.west);
\draw[ar] (plant.east) -- ++(0.45,0) |- (net.west);

\node[blk, text width=34mm, fill=red!12, draw=red!60] (adv) at (4.6,3.55)
     {\textbf{Adversary}\\[-2pt]{\scriptsize induces real load;
      writes $a_k$, $|\mathrm{supp}\,a_k|\le q$}};
\draw[ar, red!70] (adv.south) -- node[lbl, right, text=red!70] {$+\,a_k$}
      (ue.north);

\begin{scope}[on background layer]
  \node[draw=red!55, dashed, thick, rounded corners=3pt,
        fit=(adv)(ue), inner sep=3.5mm, fill=red!3] (tb) {};
  \node[draw=blue!55, dashed, rounded corners=3pt,
        fit=(net), inner sep=3mm, fill=blue!3] (gb) {};
\end{scope}
\node[lbl, text=red!70, anchor=south] at (tb.north)
     {untrusted: the adversary's reach};
\node[lbl, text=blue!70, anchor=north] at (gb.south)
     {computed inside the gNB: influenced, not forged};

\node[blk, text width=28mm] (dec) at (9.5,0)
     {\textbf{Secure decoder}\\[-2pt]{\scriptsize window $L$; needs
      $2q$-sparse obs.\ of $Hx$}};
\node[blk, text width=26mm] (mon) at (14.0,0)
     {\textbf{Monitor}\\[-2pt]{\scriptsize
      $\sup_{\hat{x}}\|\hat{x}\|_\infty>\tau_{\mathrm{eff}}$\,?}};

\draw[ar] (ue.east)  -| node[lbl, pos=0.22, above] {$y_k$} (dec.north);
\draw[ar] (net.east) -| (dec.south);
\draw[ar] (dec.east) -- node[lbl, above] {consistent set}
                        node[lbl, below] {lag $L-1$} (mon.west);

\node[diamond, draw, aspect=1.6, inner sep=1.5pt, font=\scriptsize]
      (sw) at (14.0,-3.05) {switch};
\draw[dar] (mon.south) -- node[lbl, right, align=left]
      {reaction $d$;\\total $d + L - 1$} (sw.north);

\node[sml, fill=red!5]   (pip) at (9.5,-2.30)
     {\textbf{Learned policy} $\pi_p$\\[-2pt]{\scriptsize untrusted;
      authority $\bar{u}$}};
\node[sml, fill=green!6] (pis) at (9.5,-3.95)
     {\textbf{Verified fallback} $\pi_s$\\[-2pt]{\scriptsize authority
      $\bar{u}_s>\bar{u}$}};

\draw[ar] (pip.east) -| (sw.north west);
\draw[ar] (pis.east) -| (sw.south west);
\draw[ar] (sw.south) |- ++(0,-0.85)
      -| node[lbl, pos=0.28, below] {$u_k$} (plant.south);

\node[lbl, anchor=north] at (7.0,-5.35)
     {(a) architecture and trust boundary};

\begin{scope}[shift={(19.6,-0.55)}]
  \def\Sw{2.45} \def\Rw{1.80} \def\Ew{1.26} \def\Tw{0.86}

  \draw[draw=black!70, rounded corners=2pt, fill=black!4]
       (-\Sw,-\Sw) rectangle (\Sw,\Sw);
  \draw[draw=green!45!black, rounded corners=2pt, fill=green!7]
       (-\Rw,-\Rw) rectangle (\Rw,\Rw);
  \draw[draw=orange!75!black, dashed, rounded corners=2pt]
       (-\Ew,-\Ew) rectangle (\Ew,\Ew);
  \draw[draw=blue!60, rounded corners=2pt, fill=blue!10]
       (-\Tw,-\Tw) rectangle (\Tw,\Tw);

  \node[lbl, anchor=east] at (-\Sw-0.10, \Sw-0.18) {$\mathcal{S}$};
  \node[lbl, anchor=east, text=green!45!black]
       at (-\Rw-0.10, \Rw-0.16) {$\mathcal{R}$};
  \node[lbl, anchor=east, text=orange!75!black]
       at (-\Ew-0.10, \Ew-0.14) {$\tau_{\mathrm{eff}}(\alpha)$};
  \node[lbl, anchor=east, text=blue!70] at (-\Tw-0.10, 0) {$\mathcal{T}$};

  \draw[ar, red!75, decorate,
        decoration={snake, amplitude=0.3mm, segment length=2.4mm,
                    post length=1.5mm}]
       (0.25,0.20) .. controls (0.85,0.70) and (1.15,1.00) .. (1.45,1.32);
  \node[lbl, text=red!75, anchor=north west] at (0.05,-0.12)
       {$d$-step excursion};

  \node[lbl, anchor=north, align=center] at (0,-\Sw-0.28)
       {$\mathcal{T}\subseteq\mathcal{R}\subseteq\mathcal{S}$;\ \
        safe iff $\Reach_d(\mathcal{T})\subseteq\mathcal{R}$};
  \node[lbl, anchor=north] at (0,-\Sw-1.05) {(b) the three sets};
\end{scope}

\end{tikzpicture}}
\caption{The architecture under analysis. (a) A learned policy actuates the
plant, and a monitor decides from the decoder's output whether to hand the
actuator to a verified fallback. The measurement path divides into a family
the adversary can write to and a family computed inside the base station, and
the whole of this paper turns on that division: the decoder resolves the
safety-relevant output only while the plant is $2q$-sparse observable with
respect to $Hx$ (Theorem~\ref{thm:safety}), and the margin of
Definition~\ref{def:margin} measures how much of that observability survives.
The adversary drawn here is tier T2 of \S\ref{sec:model}, which writes
$a_k$ on $q$ channels; the truthful adversary that motivates the problem
moves the state rather than the measurement and is contained by the
architecture. The decoder reads a window of $L$ samples and therefore
resolves the state as it was $L-1$ steps earlier, so its lag is part of the
delay the switch must tolerate (the observation-window corollary of~\cite{companion:delay}). (b) Safety
requires the worst-case excursion over that total delay to remain inside the
recoverable set $\mathcal{R}$. The containment $\mathcal{R}\subset\mathcal{S}$
is strict only when the fallback cannot hold the boundary of $\mathcal{S}$,
which a bounded actuator does not on its own guarantee; the sets are drawn as
boxes for legibility, and~\cite{companion:delay} measures $\mathcal{R}$ as a
body that is not a ball. Tuning the monitor for a false-alarm rate $\alpha$
inflates the trigger from $\mathcal{T}$ to $\tau_{\mathrm{eff}}(\alpha)$ and
consumes the same budget.}
\label{fig:system}
\end{figure*}

\section{Introduction}

\IEEEPARstart{M}{achine} learning is entering mobile network control at
three distinct places, and the security literature has largely treated
them as one problem. 3GPP has studied learned functions for the air
interface, covering channel state information compression, beam management
and positioning \cite{3gpp38843}. A parallel study line places learned
functions in the radio access network for energy saving, load balancing
and mobility optimisation \cite{3gpp37817}. O-RAN gives those functions a
home: the RAN Intelligent Controller hosts third-party applications on a
control loop of 10\,ms to 1\,s, acting on the
radio access network through the E2 interface \cite{polese2023understanding}.
The three settings have incompatible threat models, and recommendations
derived from their union tend to fit none of them.

This paper concerns the second setting: a learned policy that actuates
under a hard latency budget. The dominant proposal for admitting such a
policy safely is runtime assurance in the Simplex
lineage~\cite{seto1998simplex,sha2001simplicity}, extended in recent years
to neural controllers~\cite{phan2020neural}. One pairs the untrusted
policy with a verified fallback and a monitor that hands control to the
fallback when the state approaches the boundary of a safe operating
region. The appeal is precise. The learned policy is never verified at
all; only the fallback and the switching condition are, and both are small
and deterministic. In a control loop that has no budget for runtime
explanation of a decision, this is close to the only tractable posture.

\subsection{The gap}

Every statement of the runtime assurance guarantee we are aware of assumes
that the decision module evaluates its trigger predicate on the true state
\cite{seto1998simplex,sha2001simplicity,phan2020neural,hobbs2023rta}, or on
an estimate whose error is bounded and
stochastic~\cite{kaczmarek2026neuromorphic}. In the deployments this
architecture is proposed for, neither holds. A monitor in the radio access
network reads quantities derived from handset measurement reports, and
handsets are the adversary's cheapest asset. The difference between a
stochastic error and an adversarial one is the subject of this paper: the
first has a rate and admits a risk bound, and the second has neither.

The strongest form of the attack requires no forgery. A device that
truthfully reports a channel condition it has deliberately induced
violates no integrity property, so every cryptographic mechanism in the
stack passes while the control loop is compromised. Authentication of a
channel does not confer truthfulness of its content, and control loops
consume content rather than channels. This paper asks what that costs and
when it can be recovered.

\subsection{Contributions}

\begin{enumerate}
\item \emph{A precondition, stated and proved}
(Theorem~\ref{thm:one}). At zero measurement noise, a monitor that
interrupts the untrusted controller only once the state has left the trigger
set keeps the runtime assurance safety guarantee against an adversary
controlling $q$ measurement channels if and only if the plant is $2q$-sparse observable with respect to
the safety-relevant state. Sufficiency reduces to secure state estimation;
necessity we prove by an indistinguishability construction, and it holds at
every noise level because the two trajectories it builds produce identical
measurements. Remark~\ref{rem:noise} states what sufficiency becomes when
the noise is not zero, and why the equivalence cannot be claimed there.

\item \emph{The weaker condition the architecture actually needs}
(Theorem~\ref{thm:safety}). The monitor does not need the state, only which
side of the safety boundary the plant is on. Stating the safe set through a
safety map $H$ gives a strictly weaker requirement, verified to reproduce
Definition~\ref{def:sparseobs} exactly at $H = I$ and measured to raise the
tolerable budget on our reference plant from $\SimQStar$ to
$\SimQStarSafetyOneCell$ for a per-cell outage condition and to
$\SimQStarSafetyAggregate$ for one stated over total offered load. How much
it buys is a property of the safety map, and \S\ref{sec:safetyobs} reports
the maps where it buys nothing.

\item \emph{A constructive converse} (\S\ref{sec:attack}). The attack
produces two trajectories, one inside the safe set and one leaving it, whose
measurement sequences no function of the available channels can separate. It
therefore forces every monitor reading them, including one with a perfect
model, unbounded computation and full knowledge of the attack strategy,
either to interrupt a trajectory that is safe or to miss one that is not. No
comparison against a baseline detector is required, and none would
strengthen the claim.

\item \emph{Confinement, and the threshold it hides}
(Proposition~\ref{prop:confined}). The corrupted support is not free to be
anywhere: the split between handset-derived and network-derived measurement
is static and known before deployment. Confining it weakens the condition,
and the weakening is a discontinuity rather than a slope. Once the trusted
channels render the state observable on their own, observability stops
bounding the budget at all and every untrusted channel may be corrupt at
once. On our topology the crossing is at $\SimConfinedTrustedMin$ trusted
counters and takes the budget from $\SimConfinedQAgnostic$ to
$\SimConfinedQConfined$. The configuration used throughout this paper sits
below that crossing, and we report it rather than choose one that passes.

\item \emph{A safety--availability frontier, and its optimum}
(Corollaries~\ref{cor:dichotomy} and~\ref{cor:setvalued}). Below the
threshold a monitor that acts on a member of the consistent set can lose
safety and one that declines spends availability, but a monitor that acts on
the whole set does neither: it cannot fail open, and it spends availability
only when the adversary drives the state near the boundary. What survives is
that an adversary reaching the trigger radius owns the switch regardless.

\item \emph{A bound on any message-based trust monitor}
(Proposition~\ref{prop:trust}). A monitor whose statistic reads agent
messages and models of those agents, and nothing else, has no channel
outside the adversary's reach and therefore a tolerable budget of zero. We
implement such a metric and run the construction of \S\ref{sec:attack}
against it.

\item \emph{A correction to the central computable quantity}
(\S\ref{sec:identified}). The tolerable budget is decided by an exact rank
test and exact rank deficiency is not generic, so on a model fitted from
traces the reported budget moves by a factor of three at a fraction of a per
cent of model error while the margin moves by one per cent. Every published
budget must therefore name the margin floor it was computed at.

\item \emph{A detection scaling law} (\S\ref{sec:refuted}). The minimum
detectable state deviation scales inversely with the sparse observability
margin once the margin is taken at a $q$-removal and the shape of the hidden
direction is accounted for: the corrected product has a coefficient of
variation of $\SimDeltaMinCVCorrected$, against $\SimDeltaMinCV$ for the
natural but mis-specified pairing with $\gamma_{2q}$. The pairing error is
one that any similar evaluation can make.
\end{enumerate}

Table~\ref{tab:novelty} places each contribution against the closest prior
work and states the extension in each case.

The reaction-delay side of the architecture --- what the monitor may do once
it has a correct estimate, how long it may take, and how many concurrent
applications the loop can host --- is treated separately in the companion
paper on the delay budget~\cite{companion:delay}, which takes the
precondition established here as given.

\begin{table*}[!t]
\caption{Contributions against the closest prior work.}
\label{tab:novelty}
\centering
\renewcommand{\arraystretch}{1.25}
\begin{tabular}{@{}p{0.235\textwidth} p{0.30\textwidth} p{0.38\textwidth}@{}}
\toprule
\textbf{Contribution} & \textbf{Closest prior work} & \textbf{Strict extension} \\
\midrule
Runtime assurance is conditional on $2q$-sparse observability
  & Simplex, its neural successors and the safety-filtering family~\cite{seto1998simplex,sha2001simplicity,phan2020neural,alshiekh2018shielding,mehmood2022blackbox,hobbs2023rta}, stated under full state observation; and~\cite{kaczmarek2026neuromorphic}, which admits an imperfect monitor whose errors are \emph{stochastic} and bounds risk by their rates
  & An adversary has no rate. The precondition is stated and proved, and the guarantee is shown to fail discontinuously rather than degrade when it is violated \\
The condition is on the safety-relevant \emph{output}
  & Secure state estimation~\cite{fawzi2014secure,chong2015observability,shoukry2016event,shoukry2017smt,lee2019redundant}, which asks whether the \emph{state} is recoverable; functional observability~\cite{montanari2022functional,montanari2025pbh}, which asks about a functional but not about an adversary
  & The two questions asked together: functional observability quantified over attack supports. Strictly weaker than full-state observability, reduces to it exactly at $H = I$, and raises the tolerable budget threefold on our topology \\
A confined support, and the threshold it crosses
  & Secure state estimation quantifies over every support of size $2q$, since the corrupted set is taken to be arbitrary~\cite{fawzi2014secure,shoukry2017smt}; resilient placement is formulated for full-state recovery~\cite{shinohara2024placement}
  & The trust split is static and known, so the quantifier runs over one family; once the trusted rows observe the state alone the budget is unbounded within the other, which turns placement into the dominant lever and changes the placement objective from recovery to a threshold \\
Constructive indistinguishability attack on the \emph{monitor}
  & Undetectable false data injection in another domain~\cite{liu2011falsedata}; robustness bounds for attack-resilient estimation~\cite{pajic2014robustness}
  & Lifts a recovery condition to a safety architecture; the attack forces every monitor on those channels to give up safety or availability, not a particular detector \\
A frontier with a named optimum, at zero measurement noise
  & Detection-theoretic trade-offs in intrusion detection; recovery after detection~\cite{zhang2021recovery}
  & Structural rather than statistical, and the set-valued monitor is shown to dominate both alternatives while still conceding the switch \\
A bound on any message-based trust monitor
  & Consistency-scored multi-agent governance~\cite{chergui2026agents}; watermarking~\cite{mo2009replay} and barrier functions under sensor attack~\cite{clark2020cbf}
  & Such a monitor reads no channel outside the adversary's reach, so its tolerable budget is zero regardless of its accuracy on honest faults \\
The tolerable budget is not identification-robust
  & O-RAN security corpus~\cite{oran_wg11_secwork,oran_zta_2024}, organised around authorisation and model protection
  & An exact rank test moves threefold under identification noise while the margin does not, so a budget is meaningless without its floor \\
\bottomrule
\end{tabular}
\end{table*}
\section{Related work}

\subsection{Runtime assurance}

The Simplex architecture originates with Seto \emph{et
al.}~\cite{seto1998simplex} and Sha~\cite{sha2001simplicity}, and has been
developed extensively for avionics and autonomous vehicles; Hobbs \emph{et
al.}~\cite{hobbs2023rta} survey the modern family under the heading of
safety filtering. Phan \emph{et al.}~\cite{phan2020neural} extend it to
neural controllers, which is the form relevant here, and Mehmood \emph{et
al.}~\cite{mehmood2022blackbox} remove the requirement that the baseline
controller be statically verified by deciding the switch at runtime. The
learning literature reaches the same shape from the other side: shielded
reinforcement learning~\cite{alshiekh2018shielding} composes an untrusted
policy with a correctness envelope computed offline.

The monitor's own inputs have been treated, and it is worth being exact
about how. Kaczmarek~\cite{kaczmarek2026neuromorphic} models the monitor as
a stochastic sensor with a measured false-negative rate, false-positive
rate and detection-latency distribution, and derives a probabilistic bound
on failure risk over a finite horizon. That is the strongest existing
statement of the guarantee under an imperfect monitor, and it is a
statement about a monitor whose errors are stochastic and characterisable.
The distinction this paper turns on is that an adversary is neither. A
stochastic error model admits a bound because errors are drawn; the
construction of \S\ref{sec:attack} produces two trajectories whose
measurement sequences are identical, so no rate characterises it and no
horizon bound follows. Our claim is therefore not that the guarantee has
only been stated under perfect observation --- it has not --- but that
every statement of it we are aware of assumes the monitor's estimation
error is either zero or stochastic, and that adversarial corruption of the
monitor's inputs changes the failure from a rate into a discontinuity.

Two adjacent lines do treat corrupted measurement. Control barrier functions
have been extended to sensor faults and attacks~\cite{clark2020cbf}, and
attack recovery~\cite{zhang2021recovery} takes the opposite posture to
containment, restoring a trajectory after a detected attack rather than
bounding one during it. Neither asks what we ask: not whether a safe control
exists under corrupted measurement, but whether the switching condition of a
runtime-assurance architecture can be evaluated at all. Adversarial
corruption of the monitor's own inputs, in that architecture, is to our
knowledge not treated.

\subsection{Secure state estimation}

Fawzi \emph{et al.}~\cite{fawzi2014secure} establish that state
reconstruction under $q$ corrupted sensors is possible if and only if the
system is $2q$-sparse observable. Chong \emph{et
al.}~\cite{chong2015observability} characterise observability under
adversarial attack, and Shoukry and Tabuada~\cite{shoukry2016event} give
event-triggered observers for the same setting. Lee \emph{et
al.}~\cite{lee2019redundant} unify several of these notions under
redundant observability, connecting the security index, attack
detectability and observability under attack; the condition below is a
member of that family and we state it in their terms where they differ. Pajic \emph{et
al.}~\cite{pajic2014robustness} extend the recovery guarantee to noise and
modelling error, which is the bound the inflated trigger of~\cite{companion:delay} rests on, and
Shoukry \emph{et al.}~\cite{shoukry2017smt} give a satisfiability-based
decoder that is the tractable alternative to the exhaustive enumeration we
use. The undetectable false data injection of Liu \emph{et
al.}~\cite{liu2011falsedata} is the closest structural analogue to our
construction in another domain, and the watermarking line beginning with Mo
and Sinopoli~\cite{mo2009replay} is the standard answer to it, one that
requires an actuation channel the monitor controls rather than a measurement
one it merely reads. That literature asks
whether the state can be recovered. We connect it to whether a safety
architecture holds, which is a different question with a different failure
mode: a recovery result that degrades gracefully corresponds here to a
guarantee that fails discontinuously.

\subsection{Functional observability, and where our condition sits}
\label{sec:rel:funcobs}

The condition this paper proves is not about recovering the state. It is
about recovering enough of the state to decide which side of a boundary the
plant is on, which is a question about a linear function of the state, and
that question has its own literature. Montanari \emph{et
al.}~\cite{montanari2022functional} define functional observability --- the
recoverability of a target functional rather than of the full state --- and
give a graph-based theory with scalable algorithms for the minimal sensor
set; the eigenvalue characterisation is completed
in~\cite{montanari2025pbh}, which establishes the generalised
Popov--Belevitch--Hautus test for the property.

Definition~\ref{def:hobs} is that property quantified over attack supports.
We say so plainly, because the contribution here is not the condition but
the composition: that a runtime-assurance monitor needs functional and not
full observability, that the requirement is therefore strictly weaker than
the secure-estimation literature assumes when it is imported into this
architecture, and that the difference is worth a measured factor in
tolerable adversary budget. Neither literature makes that connection.
Secure state estimation asks for the state and quantifies over supports;
functional observability asks for a functional and does not consider an
adversary. The safety-relevant question needs both at once.

Two consequences follow that we flag here and return to. First, the
enumeration we use to decide Definition~\ref{def:hobs} is exponential in
the channel count, and the tests of~\cite{montanari2025pbh} are not; a
decision procedure built on them would remove the scale limit that confines
our sweeps to small deployments, and we have not built one. Second, the
placement question of Corollary~\ref{cor:design} is a resilient sensor
placement problem, for which Shinohara and
Namerikawa~\cite{shinohara2024placement} give an optimal formulation for
secure state estimation. Our placement result is an observation about one
topology and not an algorithm, and theirs is the formulation an operator
should reach for once the objective is the threshold of
Proposition~\ref{prop:confined}(ii) rather than full-state recovery.

\subsection{O-RAN security}

The O-RAN ALLIANCE security working group maintains the normative security
corpus for this architecture across thirteen work items, spanning the SMO,
the Near-RT RIC, O-Cloud, the open fronthaul, certificate management, an
OAuth~2.0 framework, application lifecycle management, log management and
security testing, and organised around a zero trust
architecture~\cite{oran_wg11_secwork,oran_zta_2024}. Independent surveys
cover the same ground~\cite{polese2023understanding,liyanage2023openran}.

AI/ML is among those work items, and has been since 2024. Its stated goal
is ``to secure AI/ML across the O-RAN architecture, including the SMO,
Non-RT RIC, Near-RT RIC, rApps, and xApps'', pursued by developing
``controls to protect against attacks on AI/ML used in O-RAN architectural
elements''~\cite{oran_wg11_secwork}. The resulting controls concern model
integrity and confidentiality, protection of training data against
poisoning, adversarial training, and differential privacy.

Every one of these is a protection, access-control or lifecycle concern.
Each asks whether a component or an asset is authentic, authorised,
confidential or intact. None asks whether a control action is correct for
the state the network is actually in, and the gap is structural rather
than a matter of coverage: authorisation is a static property of the
actor while correctness is a dynamic property of the situation. A
compromised but authorised application emits perfectly authorised actions
indefinitely, and a model whose integrity is fully intact can still be
wrong. Securing a model against attack and containing a model that
misbehaves are different problems, and the corpus addresses the first.

The framing is visible in the working group's zero trust study, in which
the terms \emph{conflict}, \emph{fallback}, \emph{control loop} and
\emph{measurement} do not occur. Learned components appear there as an
instrument of security and not as a subject of it: RICs and their
applications ``can leverage visibility with artificial intelligence (AI)
to provide continuous monitoring and dynamic policy''~\cite{oran_zta_2024}.
Deconfliction between concurrent applications is specified, but in the Near-RT RIC architecture, not in the security corpus~\cite{polese2023understanding}, and it arbitrates contention between applications and does not validate any application's action against the state of the plant.

\subsection{Attacks on the near-real-time control loop}

Closest to our threat model, Alimohammadi \emph{et
al.}~\cite{alimohammadi2024kpi} poison the key performance indicators that
feed the Near-RT control loop and propose an LSTM detector, reporting
detection rates between 62\% and 99\% depending on how far the injected
values are amplified. Their adversary spoofs traffic on the E2 interface
or compromises an E2 node, so the injected values are forgeries, and a
sufficiently sensitive detector can in principle find them.

Our adversary is weaker in one respect and stronger in another. It need
not forge anything, since load it genuinely induces is reported
truthfully, and it writes on a bounded set of channels, not an interface. Corollary~\ref{cor:dichotomy} then bounds what any detector of
that kind can achieve: below the threshold the benign and poisoned
hypotheses are observationally identical, so no detector separates them at
any accuracy or any amount of training data. The two results are complementary, not opposed. Theirs measures how well a detector
performs inside a regime where detection is possible; ours delimits that
regime.

\subsection{Adversarial machine learning in wireless systems}

A substantial body of work demonstrates adversarial perturbations against
learned physical-layer functions, including attacks that survive a
realistic channel~\cite{kim2022channel}. That work attacks the policy. We
attack the supervisor, which is the component whose purpose is to contain
an untrustworthy policy.

\section{System and threat model}
\label{sec:model}

\subsection{Plant}

We take a linearised discrete-time model of a cell-load control loop of
the kind studied for energy saving and load
balancing~\cite{3gpp37817}:
\begin{align}
  x_{k+1} &= A x_k + B u_k + w_k, & \|w_k\|_\infty &\le \bar{w},
    \label{eq:plant}\\
  y_k &= C x_k + v_k + a_k, & v_k &\sim \mathcal{N}(0,\sigma^2 I).
    \label{eq:meas}
\end{align}
Here $x_i$ is cell $i$'s load deviation from its target operating
point, normalised so that $|x_i| \ge 1$ denotes overload. We take
$A = (1-\lambda)I + cL$, with $L$ the deployment graph Laplacian, since
load shed by one cell arrives at its neighbours, and $\lambda$ a natural
decay term. The spectral radius $\rho$ of $A$ is set by $\lambda$, and
the delay ceiling of~\cite{companion:delay} depends on its value.

Two plants are used, and we distinguish them because the distinction is
not cosmetic. Results about Theorem~\ref{thm:one} --- the construction of
\S\ref{sec:attack}, the frontier of Table~\ref{tab:frontier}, and the
model-knowledge remark --- use the more expansive plant, $\rho =
\SimRhoThmOne$, where an unmonitored deviation grows fast enough to leave
$\Sset$ within a short horizon. Results about
the containment condition of~\cite{companion:delay} --- the ceiling, the feasibility region and the
necessity demonstration --- use $\rho = \SimRho$, where the delay
arithmetic of the delay ceiling of~\cite{companion:delay} is not dominated by a single
step. Every figure and table names the configuration it belongs to.

\subsection{Two families of measurement channel}
\label{sec:families}

The rows of $C$ divide into two families, shown in
Fig.~\ref{fig:system}(a). \emph{UE-derived} channels carry load proxies
aggregated from handset measurement reports. An adversary with devices in
a cell shapes these, and can do so without falsifying any field.
\emph{Network-derived} channels carry quantities such as resource block
utilisation and admission counters, computed inside the base station from
state the endpoint does not author. An endpoint may influence these by
offering real traffic; it cannot forge them.

\subsection{Adversary}

Three adversaries appear in discussions of this architecture and they are
routinely conflated. We separate
them, because our results cover exactly one.

\begin{IEEEdescription}[\IEEEsetlabelwidth{T3.}]
\item[T1.] \emph{The truthful adversary} owns devices and induces real
conditions.
It forges nothing: a handset that truthfully reports a channel condition it
deliberately caused violates no integrity property, so every cryptographic
mechanism in the stack passes. \emph{This adversary moves the state, not the
measurement}, so the monitor reads the truth and the architecture contains
it. Its role here is to explain why owning measurement channels is cheap,
and no result below is about it.

\item[T2.] \emph{The channel adversary} corrupts a fixed set $\Gam$ with
$|\Gam| \le q$ using arbitrary unbounded values, and influences the untrusted
controller's output. \emph{This is the adversary every theorem below
concerns.} The last capability models a compromised application, a poisoned
model, or a policy whose training objective was shaped upstream.

\item[T3.] \emph{The infrastructure adversary} alters $A$, $B$ or $C$,
which
corresponds to a compromised base station rather than compromised handsets.
Strictly stronger, and \emph{none of our results apply to it}.
\end{IEEEdescription}

The separation matters because the motivation and the analysis have different
subjects. T1 is what makes the problem worth posing; T2 is what is proved. A
construction that writes $a_k$ is a T2 construction however the introduction
motivates it, and~\cite{companion:delay} shows that even within T2 the standard
convention of a fixed $\Gam$ is doing more work than it appears to.

We also do not require the adversary to keep an observable performance
indicator healthy while attacking; omitting that constraint makes our attack
faster than a realistic one, which is conservative in the defender's favour.

\subsection{Safe, recoverable and trigger sets}

Write $\pi_p$ for the untrusted learned controller, with per-channel
actuation bound $\bar{u}$, and $\pi_s$ for the verified fallback, with
bound $\bar{u}_s > \bar{u}$. The inequality is a requirement rather than a
convention: a fallback with no more authority than the policy it replaces
cannot undo what that policy did.

$\Sset = \{x : \|x\|_\infty \le x_{\max}\}$ denotes the safe set.
$\Rset = \{x : \|x\|_\infty \le r\}$ denotes the \emph{recoverable} set,
from which $\pi_s$, respecting $\bar{u}_s$, returns the state to the origin
without leaving $\Sset$. $\Tset = \{x : \|x\|_\infty \le \tau\}$ denotes
the trigger set, within which the monitor stays quiet. These satisfy
$\Tset \subseteq \Rset \subseteq \Sset$, equivalently
$\tau \le r \le x_{\max}$, as in Fig.~\ref{fig:system}(b). The containment
$\Rset \subseteq \Sset$ is strict exactly when the fallback cannot hold
the boundary of $\Sset$, which for the plant of \eqref{eq:plant} means
$\bar{u}_s < (\rho-1)\,x_{\max}$. A bounded actuator is therefore not on
its own enough to separate the two sets, and~\cite{companion:delay} reports
a configuration of ours in which they coincide.

\begin{remark}
\label{rem:RvsS}
Conflating $\Rset$ with $\Sset$ is a common error in applied runtime
assurance. A state inside $\Sset$ from which the fallback cannot recover
is already lost, and a monitor that trips there is late by construction.
\cite{companion:delay} exhibits such a state and shows the fallback
failing from it.
\end{remark}

\section{The precondition}
\label{sec:thm1}

\begin{definition}[$k$-sparse observability]
\label{def:sparseobs}
$(A,C)$ is $k$-sparse observable if for every index set $\Gam$ with
$|\Gam| = k$, the pair $(A, C_{\Gam^c})$ is observable, where $C_{\Gam^c}$
denotes $C$ with the rows indexed by $\Gam$ removed.
\end{definition}

\begin{definition}[Sparse observability margin]
\label{def:margin}
Let $O(A,C) = [\,C^\top\ (CA)^\top\ \cdots\ (CA^{n-1})^\top\,]^\top$ be the
observability matrix. The margin at sparsity $k$ is
\begin{equation}
  \gamma_k = \min_{|\Gam|=k}
    \sigma_{\min}\!\left(O(A, C_{\Gam^c})\right),
  \label{eq:margin}
\end{equation}
and $\gamma_k > 0$ if and only if $(A,C)$ is $k$-sparse observable.
\end{definition}

We write $q^\star$ for the largest adversary budget the plant tolerates,
that is, the largest $q$ such that $(A,C)$ is $2q$-sparse observable.

Theorem~\ref{thm:one} requires only the binary property.
The composition analysis of~\cite{companion:delay} requires the margin, because under
measurement noise the amount of observability that survives determines the
detection power available to the monitor.

\begin{assumption}[Reachability of the hidden direction]
\label{asm:a3}
Let $z$ be a hidden direction, that is $z \in \mathcal{N}(A, C_{\Gam^c})$
for some $|\Gam| \le 2q$. There exist a scalar $\beta$ and an input
sequence admissible for the untrusted controller,
$\|u_j\|_\infty \le \bar{u}$, driving the state from the origin to
$\beta z \notin \Sset$ while the disturbance respects $\bar{w}$. Without
this an unobservable direction exists but cannot be occupied, and the
converse below has no construction behind it.
\end{assumption}

The assumption is not free, and where it binds is measured rather than
speculated about. With $B$ a scaled identity every direction is reachable at
sufficient authority and the assumption holds trivially, which is the case
in every other configuration studied here. The realistic case is a
deployment with fewer actuators than cells.

Table~\ref{tab:a3} removes them one at a time on the six-cell ring. With one
or two actuators the reachable subspace has rank $\SimAThreeRankAtFail$ and
the hidden direction the converse needs lies \emph{outside} it: the
direction exists, it is invisible to the monitor, and the untrusted
controller cannot occupy it. Assumption~\ref{asm:a3} fails there, so the
necessity half of Theorem~\ref{thm:one} has no construction behind it and
the deployment is safe for a reason this analysis does not model. From
$\SimAThreeMinHoldActuators$ actuators the reachable subspace is the whole
state space and the assumption holds.

The shape of that gap points the reassuring way. An under-actuated
deployment is not exposed by a hidden direction it cannot reach; what it
loses is the \emph{converse}, not the guarantee. Across
$\SimAThreeConfigs$ configurations the assumption fails in
$\SimAThreeFails$, so the sweep exercises it in both directions instead of
confirming it everywhere, which is all a trivially satisfied hypothesis can
ever do.

\begin{table}[!t]
\caption{Assumption~\ref{asm:a3} exercised. Under-actuation is the realistic
case, and one a scaled-identity input matrix never exercises; where the
hidden direction is unreachable the converse does not apply.}
\label{tab:a3}
\centering
\small
\begin{tabular}{cccc}
\toprule
actuators & reachable rank & hidden direction & Assumption~\ref{asm:a3} \\
\midrule
1 & 4 & \textbf{unreachable} & \textbf{fails} \\
2 & 4 & \textbf{unreachable} & \textbf{fails} \\
3 & 6 & reachable & holds \\
4 & 6 & reachable & holds \\
5 & 6 & reachable & holds \\
6 & 6 & reachable & holds \\
\bottomrule
\end{tabular}

\end{table}

\begin{theorem}[Runtime assurance under corrupted measurement]
\label{thm:one}
Consider the architecture of Fig.~\ref{fig:system}(a) with zero reaction
delay, under Assumption~\ref{asm:a3}, and let the measurement of
\eqref{eq:meas} be noiseless, $\sigma = 0$. A monitor that hands control to the fallback only
once the state has left $\Tset$ guarantees $x_k \in \Sset$ for all $k$, for
every behaviour of the untrusted controller and against every attack with
$|\Gam| \le q$, if and only if $(A,C)$ is $2q$-sparse observable.
\end{theorem}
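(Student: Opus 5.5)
The plan is to prove the two directions separately. Sufficiency reduces to the secure state estimation result of Fawzi \emph{et al.}~\cite{fawzi2014secure}, and necessity comes from an explicit indistinguishability construction built on Assumption~\ref{asm:a3}.

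For sufficiency, assume $(A,C)$ is $2q$-sparse observable and $\sigma=0$. First I would fix a window of $n$ samples and show that the consistent set is a singleton. Suppose two initial states $x,x'$ and two attack signals with supports $\Gam,\Gam'$, each of size at most $q$, produce the same output sequence on the window. The input and disturbance terms are common to both hypotheses, since the decoder knows $u$ and $w$ enters identically. Their difference therefore satisfies $C_{(\Gam\cup\Gam')^c}A^j(x-x')=0$ for $j<n$. Because $|\Gam\cup\Gam'|\le 2q$, Definition~\ref{def:sparseobs} forces $x=x'$.

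Strictly, the disturbance $w_k$ is not known to the decoder. I would handle it the way the paper does elsewhere, by taking $w$ as part of the state trajectory being estimated, or else by treating the noiseless, disturbance-known case as the one the theorem asserts. I flag this as a point to state carefully. With zero reaction delay the monitor then evaluates $x_k\notin\Tset$ on the true state. The window lag is absorbed into the zero-delay hypothesis, as the companion paper's observation-window corollary indicates. Once the monitor knows the true state, the standard Simplex invariant applies. While $x_k\in\Tset\subseteq\Rset$ the learned policy acts. At the first exit the state is still in $\Rset$, because with zero delay the switch occurs at the step the trigger is crossed. This is the containment $\Reach_0(\Tset)\subseteq\Rset$ of Fig.~\ref{fig:system}(b), which I would state as the standing hypothesis on $\tau$. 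From $\Rset$ the fallback $\pi_s$ keeps the state in $\Sset$ by the definition of $\Rset$.

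For necessity, suppose $(A,C)$ is not $2q$-sparse observable. Then some $\Gam$ with $|\Gam|=2q$ and some nonzero $z$ satisfy $C_{\Gam^c}A^jz=0$ for all $j$. Split $\Gam=\Gam_1\cup\Gam_2$ with $|\Gam_i|\le q$. By Assumption~\ref{asm:a3}, the untrusted controller can drive the plant along an admissible input sequence to $x^{(1)}_k$ ending at $\beta z\notin\Sset$. The second trajectory is the safe one, for instance the state held at the origin, or any trajectory whose difference from the first lies in the unobservable subspace of $(A,C_{\Gam^c})$. Write $e_k=x^{(1)}_k-x^{(0)}_k$.

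I would choose the inputs so that $e_k$ stays in the $A$-invariant unobservable subspace $\mathcal{N}$ of $(A,C_{\Gam^c})$. Rows in $\Gam^c$ then read identical values. Rows in $\Gam$ differ by $C_\Gam e_k$. In the unsafe world the adversary injects $a_k=-C_{\Gam_1}e_k$ on $\Gam_1$. In the safe world it injects $+C_{\Gam_2}e_k$ on $\Gam_2$. Both worlds then emit the same $y_k$. Every monitor, as a function of $y$, acts identically in both. If it switches, it interrupts a trajectory that never left $\Tset$, which the theorem's switching rule forbids. If it never switches, the unsafe world exits $\Sset$. Since $e_k$ is cancelled exactly, the argument does not depend on $\sigma$.

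The main obstacle is ensuring that the difference trajectory stays in $\mathcal{N}$ throughout, and not merely at the endpoint. Assumption~\ref{asm:a3} only guarantees reaching $\beta z$. Two steps handle this. Because $\mathcal{N}$ is $A$-invariant, it suffices that the input difference $B(u^{(1)}-u^{(0)})$ lies in $\mathcal{N}$. If that fails, I would instead let the difference evolve arbitrarily and enlarge the corrupted rows to cover it. That fallback costs support, so I expect the author to read Assumption~\ref{asm:a3} as reachability within $\mathcal{N}$, and I would state that interpretation explicitly.
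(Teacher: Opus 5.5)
Your sufficiency direction is the paper's (secure decoding via~\cite{fawzi2014secure}, then the Simplex invariant), and you state it more carefully. Cancelling the difference on the owned half $\Gam_1$ and letting the rest be attributed to $\Gam_2$ is exactly the paper's mechanism. The gap is in how you populate the two worlds for necessity. You drive them with different control sequences, $u^{(1)}$ steering to $\beta z$ and $u^{(0)}$ holding the origin, and conclude that every monitor acts identically because it is a function of $y$. But your own sufficiency step cancels the input terms between hypotheses because the decoder knows $u$, and secure decoding needs exactly that. A decoder that knows $u$ sees different input records in your two worlds. They must differ, since $B(u^{(1)}-u^{(0)})\neq 0$ is what produces the excursion. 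So ``acts identically'' fails for precisely the monitors the other direction relies on. Reading Assumption~\ref{asm:a3} as reachability within $\mathcal{N}$ does not help, because a steered excursion shows up in the input record whatever subspace it lies in.

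This is not a formality. Take $n=1$, $A=B=1$, $C=[1\ 1]^\top$, $q=1$, $\bar{w}=0$. The plant is not $2$-sparse observable and Assumption~\ref{asm:a3} holds, yet with $u$ known every trajectory consistent with the data differs from the true one by a constant. In your unsafe world the true trajectory is then the only consistent one once the steering starts. More generally, consider a monitor that trips once no consistent trajectory has stayed inside $\Tset$. It never trips in a world that stays in $\Tset$, and for a plant started in $\Tset$ it trips while the true state is still within $3\tau+\bar{u}$. Here $\Rset=\Sset$, so it keeps the plant safe whenever $3\tau+\bar{u}\le x_{\max}$.

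The paper's appendix sidesteps the mismatch by giving both explanations the same input and putting the ambiguity in the initial condition. The true state is $x^{\mathrm{ben}}_k+\alpha A^k z$, so the difference $e_k=\alpha A^k z$ is a free response, and $A$-invariance of $\mathcal{N}$ keeps it hidden at every step. The obstacle of your last paragraph then never arises, no strengthening of Assumption~\ref{asm:a3} is needed for it, and your residuals become the paper's $-(Ce_k)|_{\Gam_1}$ and $(Ce_k)|_{\Gam_2}$. The price is that the exit from $\Sset$ must now be produced by $\alpha A^k z$ itself, from a starting point the guarantee is meant to cover. The appendix obtains it simply by choosing $\alpha$, which silently requires the hidden mode to grow; in the example above it does not. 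So you have two options. Either keep the input common and state that growth as the hypothesis the converse actually uses. Or make the monitor blind to $u$: your steered construction then goes through under your strengthened reading of the assumption, but the decoding argument for sufficiency does not.
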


The restriction on the monitor is what the equivalence is about. Without it,
safety alone can always be bought with availability: a monitor that trips
whenever any consistent state is unsafe never fails open, whatever $(A,C)$
is (Corollary~\ref{cor:setvalued}, Theorem~\ref{thm:robust}). What the
precondition decides is whether safety can be had \emph{without} interrupting
a plant that is still inside its trigger set.

\begin{remark}[Why the hypothesis is $\sigma = 0$ and not a technicality]
\label{rem:noise}
The plant of \eqref{eq:meas} carries $v_k \sim \mathcal{N}(0,\sigma^2 I)$,
and the theorem cannot be stated over that model without the qualifier. At $\sigma > 0$ the noise is unbounded, so no
finite observation separates the two candidates with certainty and no
guarantee of the form ``$x_k \in \Sset$ for all $k$'' holds almost surely
at any level of sparse observability. Sparse observability is therefore
necessary at every noise level and sufficient only at $\sigma = 0$.

What survives at $\sigma > 0$ is weaker and is stated where it is used
rather than smuggled in here. Sufficiency becomes a set-valued statement:
the decoder returns a consistent set, Corollary~\ref{cor:setvalued} shows
the supremum over that set cannot fail open whenever the truth lies inside
it, and the truth lies inside it except on the residual test's
false-rejection event. Safety then inherits that event's probability,
which is the rate the test is sized for, and the trigger inflates by the
bound of~\cite{companion:delay} whose noisy estimation guarantee is due to
Pajic \emph{et al.}~\cite{pajic2014robustness}. Necessity is unaffected:
the construction of Appendix~\ref{app:thm1} produces two trajectories with
\emph{identical} measurement sequences, and adding noise to both changes
nothing about their indistinguishability.

We separate the two rather than assert an equivalence over a noisy model,
because the direction that fails under noise is the one an architect would
rely on.
\end{remark}

\begin{IEEEproof}
($\Leftarrow$) By \cite{fawzi2014secure}, $2q$-sparse observability implies
that the state is uniquely decodable from measurements of which at most
$q$ channels are corrupted. The monitor therefore evaluates its trigger
predicate on the true state, and the classical argument applies:
$\Tset \subseteq \Rset$ and forward invariance of $\Rset$ under $\pi_s$
give $x_k \in \Sset$ for all $k$.

($\Rightarrow$) If $(A,C)$ is not $2q$-sparse observable there is a
direction whose entire output history is confined to $2q$ channels. An
adversary owning half of them cancels its own half and lets the remainder
be attributed to a hypothetical attack on the half it does not own,
leaving the monitor with two states, one safe and one not, that are
consistent with the same observations. The benign one remains in $\Tset$, so
a monitor that trips only once the state has left $\Tset$ cannot trip on it,
and therefore cannot trip on the truth either. The construction is given in
full in Appendix~\ref{app:thm1}.
\end{IEEEproof}

\subsection{The condition the architecture actually needs}
\label{sec:safetyobs}

Theorem~\ref{thm:one} asks for $2q$-sparse observability of the full state.
The monitor does not need the full state. It needs to know which side of the
safety boundary the plant is on, and that is a question about an output.
Write the safe set as $\Sset = \{x : \|Hx\|_\infty \le x_{\max}\}$ for a
safety map $H$; the manuscript's set is the case $H = I$, and in a
deployment it is usually not that case, because an operator's outage
condition is stated over the cells that carry the service rather than over
every cell in the model.

\begin{definition}[$H$-sparse observability]
\label{def:hobs}
$(A,C)$ is $H$-sparse observable at level $k$ if for every $\Gam$ with
$|\Gam| = k$ the unobservable subspace $\mathcal{N}(A, C_{\Gam^c})$ is
contained in $\ker H$.
\end{definition}

Since $\ker I = \{0\}$, Definition~\ref{def:hobs} at $H = I$ is
Definition~\ref{def:sparseobs}. The margin extends with it. Writing $O_\Gam = O(A, C_{\Gam^c})$, the
safety-relevant margin at level $k$ is
\begin{equation}
  \gamma^H_k = \min_{|\Gam| = k}\ \min_{\|Hv\| = 1}\ \min_{w \in \ker H}
      \bigl\| O_\Gam (v + w) \bigr\|,
  \label{eq:hmargin}
\end{equation}
where the innermost minimum removes whatever a component in $\ker H$ can
absorb, since such a component costs nothing in safety terms and may cancel
observability that the safety map never needed. At $H = I$ the kernel is
trivial and \eqref{eq:hmargin} is \eqref{eq:margin}. As with $\gamma_k$,
$\gamma^H_k > 0$ if and only if $(A,C)$ is $H$-sparse observable at
level~$k$.

\begin{theorem}[The assurance precondition]
\label{thm:safety}
Under Assumption~\ref{asm:a3} and the noise hypothesis of
Theorem~\ref{thm:one}, a monitor that hands control to the fallback only
once the safety output has left the trigger set guarantees $x_k \in \Sset$
against every attack with $|\Gam| \le q$ if and only if $(A,C)$ is
$H$-sparse observable at level $2q$. We call this condition the \emph{assurance precondition}
and refer to it by that name throughout.
\end{theorem}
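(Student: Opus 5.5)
The plan mirrors the proof of Theorem~\ref{thm:one}, with the state replaced by the safety output $Hx$ throughout. The key observation is that the monitor's predicate $\|Hx\|_\infty\le\tau$ depends on $x$ only through its class modulo $\ker H$. So it suffices to show two things: the output class is decodable under $H$-sparse observability at level $2q$, and that class is ambiguous otherwise.

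For sufficiency ($\Leftarrow$), I would first establish an $H$-analogue of the decoding result of~\cite{fawzi2014secure}. Suppose two initial states $x,x'$ and two attacks with supports $\Gam_1,\Gam_2$, each of size at most $q$, produce the same noiseless measurement window of length $n$. The known inputs cancel by linearity, so the difference $e=x-x'$ satisfies $O(A,C_{\Gam^c})e=0$ with $\Gam=\Gam_1\cup\Gam_2$ and $|\Gam|\le 2q$. One can pad $\Gam$ to size exactly $2q$, since removing more rows only enlarges the unobservable subspace. Then $e\in\mathcal{N}(A,C_{\Gam^c})\subseteq\ker H$, so $He=0$. Applying the same argument at every step of a sliding window, every state consistent with the measurements has the same safety output $Hx_k$. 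The monitor therefore evaluates its trigger predicate on the true safety output, and the classical argument of Theorem~\ref{thm:one} applies verbatim: $\Tset\subseteq\Rset$, together with forward invariance of $\Rset$ under $\pi_s$, keeps $\|Hx_k\|_\infty\le x_{\max}$. Here the sets are read in their $H$-form.

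For necessity ($\Rightarrow$), I would adapt the construction of Appendix~\ref{app:thm1}. Failure of the condition gives a $\Gam$ with $|\Gam|=2q$ and a vector $z\in\mathcal{N}(A,C_{\Gam^c})$ with $Hz\neq 0$. Split $\Gam=\Gam_1\cup\Gam_2$ into two halves of size $q$. The output sequence $CA^jz$ is supported on $\Gam$, so it decomposes as $s^{(1)}_j+s^{(2)}_j$ with supports in $\Gam_1$ and $\Gam_2$ respectively. By Assumption~\ref{asm:a3}, the untrusted controller drives the true state along $\beta z$ with $\beta z\notin\Sset$. Since $\Sset$ is now defined through $H$, this requires $\|\beta Hz\|_\infty>x_{\max}$, and that is possible precisely because $Hz\neq0$. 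The adversary on $\Gam_1$ writes $a=-\beta s^{(1)}$. The measurements then equal those of the benign trajectory that stays at the origin, with a hypothetical attack $\beta s^{(2)}$ on $\Gam_2$. The benign trajectory keeps $Hx=0\in\Tset$, so a monitor that trips only once the safety output leaves $\Tset$ never trips on it, and by indistinguishability it never trips on the unsafe truth either.

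The step I expect to be the main obstacle is this use of Assumption~\ref{asm:a3} in the $H$-setting. The assumption is stated for hidden directions in $\mathcal{N}(A,C_{\Gam^c})$ and exit from $\Sset$, which is exactly what is needed here. The difficulty is the input contribution: while the controller steers toward $\beta z$, its input adds $\sum A^{k-1-j}Bu_j$ to the true state, and the output of that term is not confined to $\Gam$. I would handle this by having both hypothetical trajectories use the same known input sequence. The input contributions then coincide in the two measurement records, and only the $z$-component differs, so it is the state difference, not the state, that lies along $z$. The benign trajectory must still remain in $\Tset$. I would obtain this by taking the benign trajectory as the reached state minus $\beta z$, which is arranged to stay quiet, and if necessary by rescaling $\beta$. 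I would check this bookkeeping against the appendix's treatment at $H=I$, where it must reduce to Theorem~\ref{thm:one}.
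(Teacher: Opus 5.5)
Your sufficiency argument matches the paper's. Padding $\Gam_1\cup\Gam_2$ to size exactly $2q$ and carrying the conclusion through the $A$-invariance of $\mathcal{N}$ are correct refinements of it, and your necessity outline also follows Appendix~\ref{app:thm1}.

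\textbf{The gap is in the repair you propose for the input bookkeeping.} In sufficiency you let the known inputs cancel by linearity, and under that model they cancel in necessity too. Two hypotheses driven by the same inputs differ (ignoring the disturbance, as the paper does) by $d_k=A^k e$ with $e\in\mathcal{N}$. So the controller's steering never enters the difference between them, and it cannot create the hidden excursion. Subtracting a fixed $\beta z$ from a controller-driven path does not give a plant trajectory unless $Az=z$. Rescaling $\beta$ cannot make the benign candidate quiet while the truth leaves $\Sset$: once the inputs are scaled to reach $\beta z$, both excursions are linear in $\beta$, so only their ratio matters. The clash is already in your first necessity paragraph. The attack $-\beta s^{(1)}$ cancels the output of the autonomous trajectory $\beta A^j z$, not the output of a path steered from the origin.

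\textbf{How the appendix avoids this.} It builds from the benign side. Fix any trajectory $x^{\mathrm{ben}}$ that stays in $\Tset$ under the applied inputs (the origin with zero input will do). Let the truth be $x^{\mathrm{ben}}_k+\alpha A^k z$ and the attack $a_k=-(CA^k\alpha z)|_{\Gam_1}$. Then:
\begin{enumerate}
\item invariance of $\mathcal{N}$ makes the outputs on $\Gam^c$ vanish at every step;
\item the residual against $x^{\mathrm{ben}}$ is $(CA^k\alpha z)|_{\Gam_2}$;
\item the benign candidate is quiet by construction;
\item $Hz\neq 0$ is exactly what lets you choose $\alpha$ with $\|H(x^{\mathrm{ben}}_0+\alpha z)\|_\infty>x_{\max}$.
\end{enumerate}
For an exit at a later step $k$ you need $HA^kz\neq 0$ instead, since $\ker H$ need not be $A$-invariant. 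You can get it by replacing $z$ with $A^{-k}z\in\mathcal{N}$ when $A$ is invertible.

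Note that in this construction the excursion is carried by the initial offset and the plant's own dynamics, not by the controller. A truth that starts inside $\Tset$ and later leaves $\Sset$ therefore needs, with $x^{\mathrm{ben}}=0$, that $\|HA^kz\|_\infty>(x_{\max}/\tau)\|Hz\|_\infty$ for some $k$. That growth is what the expansive plant used for Theorem~\ref{thm:one} provides. No amount of actuation authority supplies it while the inputs are shared.
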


Remark~\ref{rem:noise} applies here unchanged, and for the same reason:
necessity is a statement about two identical measurement sequences and
does not depend on the noise level, while sufficiency at $\sigma > 0$ is
the set-valued statement of Corollary~\ref{cor:setvalued} rather than this
equivalence.

\begin{IEEEproof}
Sufficiency: any two states consistent with the observations differ by an
element of $\mathcal{N}(A, C_{\Gam'^c})$ for some $|\Gam'| \le 2q$, and by
hypothesis $H$ annihilates that difference, so every consistent state agrees
on membership of $\Sset$ and the monitor's predicate is well defined on the
consistent set. The classical argument then applies unchanged. Necessity:
if some hidden direction $z$ has $Hz \neq 0$, scale it as in
Appendix~\ref{app:thm1}; the two candidates then disagree on membership of
$\Sset$ while producing identical measurements.
\end{IEEEproof}

Two things make this a generalisation rather than a restatement, and both
are measured. At $H = I$ the two conditions agree at every level in both the
boolean and the margin, and the agreement is verified numerically at
every level rather than argued. And the gap is real: over $\SimSafetyMaps$
safety maps on the plant of \S\ref{sec:attack},
Table~\ref{tab:safety} shows $\SimSafetyWitnesses$ for which the tolerable
budget rises above $\SimQStar$ while ordinary sparse observability fails.
How far it rises is a property of the map and not a single number. A
per-cell outage condition, over one cell or over two on opposite sides,
reaches $\SimQStarSafetyOneCell$. A condition stated over the total offered
load, which is how an energy-saving objective is usually written, reaches
$\SimQStarSafetyAggregate$: the aggregate is blind to the whole of a hidden
direction that a per-cell condition still partly sees, and that is the
mechanism rather than a coincidence of this ring.

\begin{remark}[A tolerance that cannot fire at rank one]
\label{rem:rank1tol}
The rank-one rows of this table sit on a numerical trap that is worth a
paragraph, because its shape recurs.

The exhaustive test decides $H$-sparse observability from the margin
\eqref{eq:hmargin}, and snaps a margin below the arithmetic's resolution to
zero, because a boolean derived from a $4\times 10^{-16}$ margin reports the
floating-point format rather than the plant. A natural guard takes that
tolerance relative to the largest singular value of the matrix under test.
When $H$ has rank one that matrix has a single column, so its largest and
smallest singular values are the same number and the test reduces to
$x \le x \cdot \epsilon$, which is false for every positive $x$. \emph{Such a
guard cannot fire at $\mathrm{rank}\,H = 1$}, which is exactly the case a
per-cell safety condition produces, and it reports observability at margins
of $4\times 10^{-16}$ and $2\times 10^{-16}$; on this table it would report
$q^\star = 3$ for both per-cell maps.

The trap is exposed by a second, independent decision procedure --- the
spectral route of \S\ref{sec:spectral} --- and by requiring the two to agree:
with a relative tolerance they disagree on precisely the rank-one rows. We
take the tolerance from the scale of the observability matrix, which makes it
independent of $\mathrm{rank}\,H$, and the values printed above are computed
that way.

A guard that cannot fire is invisible to every test that exercises it,
including one written specifically to catch a margin being read as
observability. Only a second computation of the same quantity exposes it.
\end{remark}

The relaxation is not free, and the rows where it buys nothing are what make
the result a property of the safety map rather than of our construction.
With the safety condition stated over three alternating cells the hidden
subspace carries safety signal again and the budget returns to
$\SimQStarSafetyThreeAlt$. An architect therefore has a third lever
alongside trusted-channel placement and actuation authority: state the
safety condition over what actually matters, and check
Definition~\ref{def:hobs} against it.

\begin{table}[!t]
\caption{The tolerable budget under different safety maps, on the plant of
\S\ref{sec:attack}. The first row is the control: at $H = I$ the condition
must reproduce Definition~\ref{def:sparseobs} exactly, and it does. Rows in
bold are configurations that are $H$-sparse observable at a level where
ordinary sparse observability fails.}
\label{tab:safety}
\centering
\begin{tabular}{lccc}
\toprule
safety map $H$ & rank & level & $q^\star$ \\
\midrule
identity (the published safe set) & 6 & 3 & 1 \\
one safety-relevant cell & 1 & 5 & \textbf{2} \\
two, opposite sides of the ring & 2 & 5 & \textbf{2} \\
three contiguous & 3 & 3 & 1 \\
three alternating & 3 & 3 & 1 \\
total offered load & 1 & 7 & \textbf{3} \\
\bottomrule
\end{tabular}

\end{table}

\begin{remark}[Why the threshold is $2q$ and not $q$]
The adversary never reproduces the benign measurement sequence, and cannot:
the difference between the two candidates occupies $2q$ channels while the
adversary owns $q$. It cancels only its own half of that difference and
allows the remainder to be attributed to a hypothetical attack on channels
it does not own. This asymmetry is the whole of the mechanism, and it is
why the requirement doubles.
\end{remark}

\subsection{The converse as a construction}
\label{sec:attack}

We realise the converse as an explicit attack, not an existence argument, since the practical question a reader asks is whether an actual
adversary could mount it. In a
$\SimPlantCells$-cell ring deployment with two network-derived channels,
giving $p = \SimPlantChannels$ and a tolerable budget of
$q^\star = \SimQStar$, the construction at $q = \SimAttackQ$ returns a
hidden direction observable only on channels $\SimHiddenChannels$, with
leakage onto channels outside that support of $\SimLeakage$ and residual
support of $\SimSupportTrue$ under both hypotheses.

Leakage is asserted at machine precision before any safety violation is
reported. A construction that produces a violation while leaking onto a
channel the adversary does not own has not defeated the monitor; it has
only not been checked.

\begin{corollary}[Placement of trusted measurement]
\label{cor:design}
If every UE-derived channel is potentially adversarial, the
network-derived channels must render the safety-relevant subspace
observable on their own.
\end{corollary}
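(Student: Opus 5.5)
The plan is to derive Corollary~\ref{cor:design} from Theorem~\ref{thm:safety} by specialising the attack budget to the worst case the hypothesis allows. Write $\mathcal{U}$ for the index set of UE-derived rows of $C$ and $\mathcal{N}$ for the network-derived rows, so that $C_{\mathcal{U}^c} = C_{\mathcal{N}}$. "Every UE-derived channel is potentially adversarial" means the defender must tolerate a corrupted support that may be any subset of $\mathcal{U}$, including $\mathcal{U}$ itself. The claim to prove is then that safety against such an adversary requires $\mathcal{N}(A, C_{\mathcal{N}}) \subseteq \ker H$. This is $H$-sparse observability evaluated at the single support $\Gamma = \mathcal{U}$.

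The necessity step reuses the indistinguishability construction of Appendix~\ref{app:thm1} rather than the $2q$ counting argument. Suppose some $z \in \mathcal{N}(A, C_{\mathcal{N}})$ has $Hz \neq 0$. Then $C_{\mathcal{N}}A^j z = 0$ for all $j$, so the entire output history of $z$ lies on UE-derived rows.

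The construction has four steps.
\begin{enumerate}
\item Invoke Assumption~\ref{asm:a3} to let the untrusted controller steer the true state to $\beta z$ with $\|H\beta z\|_\infty > x_{\max}$.
\item Let the adversary, which now owns all of $\mathcal{U}$, write $a_k = -C_{\mathcal{U}}A^k\beta z$ on those rows. Because the adversary owns the whole support of the hidden direction, it can cancel it entirely. It does not need the half-and-half split of the Remark on $2q$.
\item Observe that the measurement sequence is therefore identical to the benign one from a state in $\Tset$.
\item Conclude that a monitor tripping only once the safety output leaves the trigger set cannot trip, and the plant leaves $\Sset$.
\end{enumerate}

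For the converse direction I would note that if $C_{\mathcal{N}}$ renders $H x$ observable, then any two consistent states differ by an element of $\mathcal{N}(A, C_{\mathcal{N}}) \subseteq \ker H$. The sufficiency argument of Theorem~\ref{thm:safety} then applies verbatim, with the trusted rows alone as the decoding channels.

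The main obstacle is making precise the step from "every UE channel potentially adversarial" to "the relevant support is $\mathcal{U}$". Theorem~\ref{thm:safety} quantifies over all supports of size $2q$, whereas here the support is confined to $\mathcal{U}$ but unbounded in size within it. I would handle this by treating it as the confined-support version of the theorem: the quantifier over $|\Gamma| = 2q$ is replaced by a quantifier over $\Gamma \subseteq \mathcal{U}$. The union of two such supports is again contained in $\mathcal{U}$, so the doubling collapses and the binding case is simply $\Gamma = \mathcal{U}$.

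I would also check that Assumption~\ref{asm:a3} is invoked for a hidden direction of $(A, C_{\mathcal{N}})$, whose support $\mathcal{U}$ may have more than $2q$ channels. The assumption must be read for this confined family, and I would state that extension explicitly rather than silently reuse the $|\Gamma| \le 2q$ version.
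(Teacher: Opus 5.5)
Your proposal is correct and follows the paper's own route. The corollary is the necessity half of Proposition~\ref{prop:confined}(i) taken at $\Gamma' = U$ and read through the safety map $H$. You realise it by the construction of Appendix~\ref{app:thm1}, where full cancellation replaces the $\Gamma_1/\Gamma_2$ split because the adversary owns the whole output support of the hidden direction. Your converse is the $H$-version of Proposition~\ref{prop:confined}(ii).

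Two small refinements. With the split construction, the requirement already binds once the adversary may hold any $\lceil |U|/2 \rceil$ UE channels. Also, taking $q = |U|$ (or that half) puts $\Gamma = U$ within the $|\Gamma| \le 2q$ of Assumption~\ref{asm:a3} as written, so you need no extension of the assumption.
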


This converts the informal advice to add an uninfluenceable measurement
path into a computable requirement. Our sweep shows in addition that
placement matters as much as count, and not in the intuitive direction:
two adjacent network-derived
counters yielded $q^\star = \SimQStarTwoAdjacent$ where two maximally
separated ones yielded $q^\star = \SimQStarTwoSpread$. Sensor placement for sparse observability is a combinatorial problem, not a coverage problem, and intuition about spreading sensors out is not a reliable
guide. We report this as evidence against relying on intuition, not as a
placement heuristic, since a different topology can invert it.

\begin{figure*}[!t]
\centering
\includegraphics[width=\textwidth]{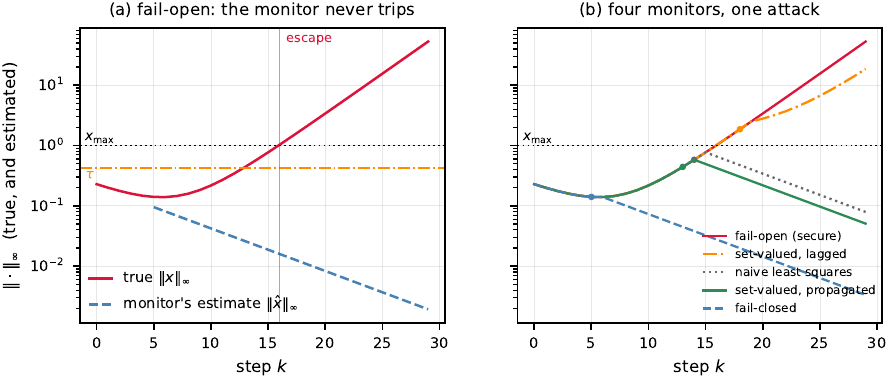}
\caption{Corollary~\ref{cor:dichotomy}, at zero measurement noise.
(a) A monitor that emits an adversary-chosen member of the consistent set.
Its own estimate decays monotonically to $\SimOpenEstimatePeak$ and below,
while the true state leaves the safe set at step $\SimOpenEscapeStep$ and
reaches $\SimOpenPeak$. At the moment of escape the estimate under-reads
the true norm by a factor of $\SimOpenUnderstatement$, and the monitor
trips $\SimOpenTrips$ times in $30$ steps. The plotted estimate is the one
the monitor computes, not a quantity reconstructed by the plotting layer.
(b) The same attack against four monitors. The set-valued monitor of
Corollary~\ref{cor:setvalued} holds the safe set at $\SimSetValuedPeak$ and
is forced into fallback only at step $\SimSetValuedFallbackStep$, against
step $\SimClosedFallbackStep$ for the fail-closed monitor; the same monitor
without propagation across the observation window loses the safe set at
step $\SimUnpropEscapeStep$, which is the observation-window corollary of~\cite{companion:delay} charging
its delay. Trajectories are those of the simulation itself; the plotting
layer performs no computation.}
\label{fig:dichotomy}
\end{figure*}

\begin{corollary}[Safety--availability frontier]
\label{cor:dichotomy}
Below $2q$-sparse observability the decoder holds two consistent states,
one safe and one not. Any monitor must therefore either act on a member of
that set, in which case an adversary that selects the member loses it
safety, or decline to act on a member, in which case it spends
availability. No monitor reading only those channels escapes the exchange.
\end{corollary}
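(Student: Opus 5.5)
The plan is to reuse the indistinguishability construction behind the necessity half of Theorem~\ref{thm:one} and Theorem~\ref{thm:safety}, and then case-split on the monitor's decision rule. Assume $(A,C)$ is not $2q$-sparse observable (more generally, not $H$-sparse observable at level $2q$). Then there is an index set $\Gam_0$ with $|\Gam_0| \le 2q$ and a direction $z \in \mathcal{N}(A, C_{\Gam_0^c})$; when working with the safety map, choose $Hz \neq 0$. Split $\Gam_0 = \Gam_1 \cup \Gam_2$ with $|\Gam_i| \le q$.

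The construction proceeds in three steps.
\begin{enumerate}
\item By Assumption~\ref{asm:a3}, the untrusted controller steers the true state along a trajectory $x_k$ that reaches $\beta z \notin \Sset$, while a benign companion trajectory $x'_k = x_k - \beta_k z_k$ stays in $\Tset$. Here $z_k$ is the component along the unobservable subspace; that subspace is $A$-invariant, so the difference propagates inside it.
\item The difference $x_k - x'_k$ has output $C(x_k - x'_k)$ supported on $\Gam_0$ at every step. The adversary owns $\Gam_1$ and writes $a_k = -C_{\Gam_1}(x_k - x'_k)$ there.
\item The measurement $y_k$ then coincides with what $x'_k$ would produce under an attack of size at most $q$ on $\Gam_2$. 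Hence both $x_k$ (attack on $\Gam_1$) and $x'_k$ (attack on $\Gam_2$) lie in the decoder's consistent set, with identical $y_{0:k}$, and one is safe while the other is not.
\end{enumerate}
With $\sigma=0$ the equality is exact. As in Remark~\ref{rem:noise}, adding the same noise to both sequences preserves indistinguishability.

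The dichotomy then follows because a monitor is a map from $y_{0:k}$ to \{stay, switch\}, so it cannot depend on which candidate is true. Suppose first that its rule acts on a member $\hat x$ of the consistent set, meaning it switches iff $\hat x \notin \Tset$. The adversary chooses the scenario in which the selected member is $x'_k$; by symmetry it can run the mirrored attack on $\Gam_2$. The monitor then stays quiet while the truth leaves $\Sset$, so safety is lost. Suppose instead that the rule does not simply trust a member, so that it switches whenever some consistent state is outside $\Tset$ (the set-valued or fail-closed rule). It then switches on the observation sequence that is also produced by the benign trajectory $x'_k \in \Tset$. That switch interrupts a plant inside its trigger set, which is exactly the availability cost. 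Because every deterministic rule either switches on this $y_{0:k}$ or does not, the two cases are exhaustive. The conclusion holds for any function of the available channels, so the statement covers every monitor and not a particular detector.

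The main obstacle is the first step: the benign companion must remain in $\Tset$ for the whole horizon while the true trajectory escapes. The plan is to take the benign trajectory to be the one the controller would produce with the $z$-component removed. The linearity of \eqref{eq:plant} makes the difference evolve autonomously in the invariant subspace, and scaling that difference by $\beta$ then lets the true state exit $\Sset$. If disturbances complicate this, set $w_k$ equal on both trajectories; it cancels in the difference. A secondary subtlety is keeping the split of $\Gam_0$ consistent with a fixed $\Gam$, which is handled by fixing $\Gam_1$ for the whole attack.
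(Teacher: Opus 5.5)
Your proposal is correct and takes the paper's route. The two consistent states come from the construction of Appendix~\ref{app:thm1}: you split the $2q$-channel support into an owned and an unowned half, cancel the hidden output on the owned half, and let the remainder be read as a $q$-sparse attack on the other half. Your switch-or-not split on the shared observation sequence, with the mirrored attack on $\Gamma_2$ realising the availability branch, is a slightly tighter rendering of the corollary's act-on-a-member versus decline dichotomy, and it shares the appendix's informality about how Assumption~\ref{asm:a3} realises the trajectory $x^{\mathrm{ben}}_k + \alpha A^k z$.
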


The exchange is not a choice between emitting an estimate and refusing. A
monitor may act on the whole consistent set rather than on a member of it, and doing so is
strictly better than either alternative.

\begin{corollary}[The set-valued monitor is the frontier's optimum]
\label{cor:setvalued}
Let the monitor hand control to the fallback exactly when
$\sup\{\|\hat{x}\|_\infty : \hat{x} \text{ consistent}\} > \tau$.
\emph{Whenever the true state lies in the consistent set}, the monitor does
not lose safety, at any margin, because the supremum bounds the true norm
from above. It spends availability only when some consistent state is
genuinely near the boundary.
\end{corollary}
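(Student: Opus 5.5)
The plan is to prove the two claims of Corollary~\ref{cor:setvalued} separately, since each follows almost directly from the definition of the supremum over the consistent set. Write $\mathcal{X}_k$ for the consistent set returned by the decoder at step $k$: the set of states $\hat{x}$ for which some attack signal supported on at most $q$ channels reproduces the observed window. Write $s_k = \sup\{\|\hat{x}\|_\infty : \hat{x}\in\mathcal{X}_k\}$. The standing hypothesis is $x_k \in \mathcal{X}_k$. At $\sigma=0$ this holds by construction, because the true state together with the true attack is one admissible explanation. At $\sigma>0$ it holds off the false-rejection event of the residual test, as in Remark~\ref{rem:noise}.

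\emph{Safety.} The first step is the pointwise bound $\|x_k\|_\infty \le s_k$, which is immediate from membership. Suppose the monitor has not tripped up to step $k$, so $s_j \le \tau$ for all $j\le k$. Then $\|x_j\|_\infty \le \tau$, that is, $x_j\in\Tset$ for every step on which the learned controller acts. The first step at which $s_k>\tau$ hands control to $\pi_s$ while, by the previous step, the state is still in $\Tset$. With zero reaction delay the classical argument used in the sufficiency half of Theorem~\ref{thm:one} then applies verbatim: $\Tset\subseteq\Rset$ and forward invariance of $\Rset$ under $\pi_s$ keep $x_k\in\Sset$ for all $k$. The argument never uses uniqueness of the estimate, nor any lower bound on $\gamma_{2q}$ or $\gamma^H_{2q}$. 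This is what ``at any margin'' means, and I would say explicitly that the proof is independent of Definition~\ref{def:sparseobs}.

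The one subtlety is the transition step. Before control passes to $\pi_s$, the state may already have moved under the adversarially influenced input. I would handle this exactly as Theorem~\ref{thm:one} does, by assuming zero reaction delay and a window-propagated consistent set. For a decoder with lag $L-1$, I would defer to the observation-window corollary of~\cite{companion:delay}, as Fig.~\ref{fig:dichotomy}(b) does.

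\emph{Availability.} The second claim is a contrapositive. The monitor trips only if $s_k>\tau$. Since $s_k$ is a supremum, this means some $\hat{x}\in\mathcal{X}_k$ has $\|\hat{x}\|_\infty>\tau$, so a consistent state lies outside the trigger set and is in that sense near the boundary. Conversely, if every consistent state is in $\Tset$, the monitor stays quiet. In particular, when the plant is $2q$-sparse observable, $\mathcal{X}_k=\{x_k\}$ and the monitor coincides with the true-state monitor, so it loses no availability relative to it. When the condition fails, the excess interruptions are exactly those steps where the hidden direction, scaled as in Appendix~\ref{app:thm1}, carries a consistent state past $\tau$. This explains why it dominates the fail-closed monitor of Corollary~\ref{cor:dichotomy}.

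I expect the main obstacle to be making ``genuinely near the boundary'' precise. The cleanest reading is the one above: a trip certifies the existence of a consistent state outside $\Tset$. A further point is that $\mathcal{X}_k$ is a finite union of affine subspaces, one per support, each intersected with the bounded-disturbance constraints, so $s_k$ may be infinite on an unbounded slice. I would note that $s_k=\infty$ simply means the monitor trips, which is safe. The statement is therefore consistent, although it concedes availability in that case, in line with ``the adversary owns the switch''.
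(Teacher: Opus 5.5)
Your proposal is correct and follows the paper's own argument. Because the truth lies in the consistent set, the supremum bounds $\|x_k\|_\infty$ from above, so the monitor stays quiet only while $x_k\in\Tset$; the classical $\Tset\subseteq\Rset$ and forward-invariance argument from the sufficiency half of Theorem~\ref{thm:one} then applies with no appeal to the margin, and a trip certifies a consistent state outside $\Tset$, which is exactly part (i) of Theorem~\ref{thm:robust} and its proof. The only thing to tighten is the handover step: at the first $k$ with $s_k>\tau$ your argument certifies $x_{k-1}\in\Tset$, not $x_k$. That is the transition subtlety you already flag, and the paper absorbs it into the zero-reaction-delay hypothesis no more carefully than you do.
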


The hypothesis is not vacuous and it is not automatic. At zero measurement
noise the true state is always consistent, so the corollary applies
unconditionally and that is the regime Table~\ref{tab:frontier} reports. At
non-zero noise it holds exactly when the decoder's residual test admits the
truth, which is a false-rejection event with the probability the test is
sized for; the same union bound that sets $\tau_{\mathrm{eff}}$ in the
companion paper~\cite{companion:delay} sizes it. Safety therefore inherits
the residual test's false-rejection rate, and a monitor built on this
corollary must report that rate rather than claim safety outright. We state
the hypothesis because the alternative --- asserting safety at any noise
level --- would be claiming a guarantee the estimator cannot supply.

Fig.~\ref{fig:dichotomy} shows the two halves of this. Panel (a) is the
monitor Corollary~\ref{cor:dichotomy} concerns, watched against its own
estimate; panel (b) places all four monitors on one axis.

What survives of the impossibility is sharper, and it is what
Table~\ref{tab:frontier} reports. An adversary willing to drive the state
along the hidden direction as far as the trigger radius forces the fallback
whenever it chooses, and no monitor reading these channels prevents that.
The architecture converts a safety failure into an availability cost; it
does not remove the adversary's control of the switch. The absence of noise
is the point: the frontier is not produced by a detection threshold but by
two candidate states being observationally identical, so no reduction in
noise and no increase in computation removes it.

\begin{table}[!t]
\caption{Five monitors against one attack, at zero measurement noise
($x_{\max} = \SimXMax$). The naive least-squares row is current practice
and is \emph{not} the monitor Corollary~\ref{cor:dichotomy} concerns: it is
attenuated rather than deceived, and it does trip. The last row is the
set-valued monitor without propagation across its own observation window,
included as the control for that claim.}
\label{tab:frontier}
\centering
\begin{tabular}{lccc}
\toprule
monitor & left $\mathcal{S}$ & peak $\|x\|_\infty$ & fallback \\
\midrule
fail-open, adversarial choice & yes, step 16 & 52.9 & never \\
naive least squares & no & 0.770 & step 14 \\
fail-closed & no & 0.228 & step 5 \\
\textbf{set-valued} & \textbf{no} & \textbf{0.580} & \textbf{step 13} \\
set-valued, not propagated & yes, step 16 & 18.5 & step 18 \\
\bottomrule
\end{tabular}

\end{table}

Two rows of Table~\ref{tab:frontier} deserve comment. A naive least-squares monitor, given the
same attack, does \emph{not} lose the safe set: it under-reads the true
state but still trips, at step $\SimNaiveFallbackStep$, with a peak of
$\SimNaivePeak$. Deploying it is inadvisable for other reasons, but the
failure Corollary~\ref{cor:dichotomy} describes belongs to the monitor that
holds the consistent set and emits a member of it, and only that monitor
exhibits it here. Second, the set-valued monitor's safety depends on the
estimate being carried forward to the present step; thresholding the raw
decoded state, which is the state as it was at the start of the window,
loses the safe set on this plant.

\begin{remark}[Model knowledge required]
The construction assumes exact knowledge of $(A,C)$. Degrading the
adversary's model makes the hidden direction leak onto channels it does
not own, and how much is a distribution rather than a number: the
perturbation is random, and its draw-to-draw spread is comparable to the
effect. Over $\SimModelDraws$ draws at $\SimModelErrPct$ per cent error
the leakage at the safety boundary has median $\SimLeakAtBoundary$ with
$[\SimLeakPTen, \SimLeakPNinety]$ covering the central eight deciles, against
an assumed noise floor of $\SimNoiseFloor$. Detection therefore occurs in
a fraction $\SimProbDetect$ of draws, not always. At
$\SimModelErrHalfPct$ per cent the fraction is $\SimProbDetectHalf$, so
that is where the defence stops being reliable. Model uncertainty is a
genuine defence against the attack of Section~\ref{sec:attack}, and a
probabilistic one.
\end{remark}

\subsection{Confinement to a known family, and the threshold it crosses}
\label{sec:confined}

Theorem~\ref{thm:one} quantifies over every subset of channels of size
$2q$, because the corrupted support is unknown and could be anywhere. In
this architecture it cannot be anywhere. The division of
\S\ref{sec:families} is static, known before deployment, and not a property
the adversary chooses: a party holding handsets cannot write on a counter
computed inside the base station. Writing $U$ for the untrusted family, the
question an operator actually faces is what budget survives when the
support is known to lie inside $U$.

Two things change, and the second is a discontinuity rather than a
relaxation.

\begin{proposition}[Confinement]
\label{prop:confined}
Let the corrupted support satisfy $\Gam \subseteq U$ with $|\Gam| \le q$.
\begin{enumerate}
\item The guarantee holds if and only if $(A, C_{\Gam'^c})$ is observable
for every $\Gam' \subseteq U$ with $|\Gam'| \le 2q$. The quantifier runs
over subsets of $U$ alone, so the condition is implied by, and strictly
weaker than, $2q$-sparse observability.
\item If $(A, C_{U^c})$ is observable, the condition holds for every $q$ up
to $|U|$. Observability then places no bound on the budget at all.
\end{enumerate}
\end{proposition}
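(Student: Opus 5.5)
The plan is to rerun the proof of Theorem~\ref{thm:one} with the quantifier over attack supports restricted to subsets of $U$, checking at each step that nothing else changes. Throughout, the standing hypotheses are those of Theorem~\ref{thm:one}: zero noise, zero reaction delay, Assumption~\ref{asm:a3}, and a monitor that trips only once the state has left $\Tset$.

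For sufficiency in (i), I would replace the appeal to~\cite{fawzi2014secure} by its underlying argument, run over supports in $U$. Suppose two initial states $x$ and $x'$, with attacks supported on $\Gam_1, \Gam_2 \subseteq U$ and $|\Gam_i| \le q$, produce the same measurement window. Then the difference of the stacked outputs, $O(A,C)(x - x')$, is supported on $\Gam_1 \cup \Gam_2 =: \Gam'$. This set lies in $U$ and has $|\Gam'| \le 2q$, so $O(A, C_{\Gam'^c})(x - x') = 0$. The known inputs $u_k$ cancel in the difference, and $w$ is handled as in the classical argument. By hypothesis $(A, C_{\Gam'^c})$ is observable, so $x = x'$: the state is uniquely decodable. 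The monitor therefore evaluates its predicate on the true state, and the $\Tset \subseteq \Rset$ invariance argument of Theorem~\ref{thm:one} applies verbatim.

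For necessity in (i), suppose some $\Gam' \subseteq U$ with $|\Gam'| \le 2q$ makes $(A, C_{\Gam'^c})$ unobservable. Take a hidden direction $z$ in its unobservable subspace and split $\Gam' = \Gam_1 \cup \Gam_2$ with $|\Gam_i| \le q$; both pieces are automatically inside $U$. The adversary, owning $\Gam_1$, writes $a_k = -C_{\Gam_1} A^k \beta z$, which cancels its own half of the difference. The residual on $\Gam_2$ is then attributable to a fictitious attack on $\Gam_2$, and that fictitious attack is itself admissible under confinement. By Assumption~\ref{asm:a3} the untrusted controller can occupy $\beta z \notin \Sset$ while the benign twin stays in $\Tset$. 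The appendix construction therefore goes through unchanged.

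The main obstacle is the comparison with $2q$-sparse observability, and strictness in particular. Implication is immediate, since the family $\{\Gam' \subseteq U\}$ is a subfamily of all supports of size at most $2q$. Observability is also monotone under removing fewer rows, so sets with $|\Gam'| < 2q$ add no new constraint. Strictness needs a witness, and I would exhibit the reference ring from \S\ref{sec:attack}, where the confined budget exceeds the agnostic one. Abstractly, the witness is any $(A,C)$ with a trusted row $c$ such that $(A, c)$ is observable, while $2q$-sparse observability fails because that row can be deleted.

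For (ii), take any $\Gam' \subseteq U$. Removing $\Gam'$ leaves all rows of $C_{U^c}$ in place, so the observability matrix of $(A, C_{\Gam'^c})$ contains that of $(A, C_{U^c})$ as a row-submatrix. Hence it has full column rank. The condition in (i) therefore holds for every $\Gam' \subseteq U$, whatever its size. This gives any $q$ with $2q$ exceeding $|U|$ as well, and in particular $q = |U|$. The only care needed is that (i) was stated for $|\Gam'| \le 2q$ while $\Gam'$ cannot exceed $|U|$, which is harmless.
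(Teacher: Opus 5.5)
Your proposal is correct and follows the paper's proof. For (i), the paper likewise reduces both directions to the fact that two candidates are indistinguishable exactly when their difference lies in $\mathcal{N}(A, C_{\Gam'^c})$, with $\Gam'$ the union of two candidate supports inside $U$, and it likewise notes the sub-collection relation; for (ii), its inclusion $\Gam'^c \supseteq U^c$ is the null-space form of your row-submatrix argument. What you add is detail the paper leaves implicit: the Fawzi-style uniqueness step and the appendix construction rerun with $\Gam_1, \Gam_2 \subseteq U$, and an explicit witness for strictness, which the paper's proof does not give and only Table~\ref{tab:confined} supplies.
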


\begin{IEEEproof}
(i) Two states are indistinguishable exactly when their difference lies in
$\mathcal{N}(A, C_{\Gam'^c})$ for $\Gam'$ the union of the two candidate
supports, and that union is contained in $U$ and has size at most $2q$.
Quantifying over subsets of $U$ is therefore exactly the requirement, and
it is a sub-collection of the subsets Definition~\ref{def:sparseobs}
quantifies over. (ii) Every such $\Gam'$ satisfies
$\Gam'^c \supseteq U^c$, so $\mathcal{N}(A, C_{\Gam'^c}) \subseteq
\mathcal{N}(A, C_{U^c}) = \{0\}$ whatever $q$ is.
\end{IEEEproof}

Part (ii) is Corollary~\ref{cor:design} made quantitative, and it is the
reason the placement question is worth computing rather than reasoning
about. We call the condition it states the \emph{confinement threshold}:
the point at which the trusted family renders the state observable without
help. Below it the operator buys budget one channel at a time against a
combinatorial condition; at it the untrusted family stops mattering,
because the decoder can discard all of it and still resolve the state.
There is no intermediate regime in which the budget grows smoothly.

Stated as a placement objective, crossing the confinement threshold is a
resilient sensor placement problem of the kind Shinohara and
Namerikawa~\cite{shinohara2024placement} formulate, with one difference
that matters: their objective is recovery of the state under a budget,
and this one is a threshold beyond which the budget stops binding. We do
not solve the placement problem here, and note it as the form the question
should take.

\begin{table}[!t]
\caption{Confinement against the support-agnostic condition, on the ring of
\S\ref{sec:attack} with $\SimConfinedUntrusted$ handset-derived channels.
``Alone'' is whether the trusted family renders the state observable by
itself, which is Proposition~\ref{prop:confined}(ii). Budgets in bold are
those confinement raises. Every margin is quoted at the floor
$\SimConfinedFloor$; at a zero floor one configuration reads as observable
at a margin of $10^{-5}$, which is observability no measurement noise
delivers.}
\label{tab:confined}
\centering
\small
\begin{tabular}{lccccc}
\toprule
trusted & $p$ & alone & agnostic & confined & margin \\
\midrule
2 & 8 & no & 1 & 1 & 0.632 \\
3 & 9 & yes & 2 & \textbf{6} & 0.600 \\
4 & 10 & yes & 2 & \textbf{6} & 0.600 \\
5 & 11 & yes & 3 & \textbf{6} & 0.771 \\
\bottomrule
\end{tabular}

\end{table}

Table~\ref{tab:confined} measures both parts. With
$\SimConfinedRefNet$ trusted counters the trusted family is not observable
on its own, confinement buys nothing, and the budget stays at
$\SimConfinedRefQ$. With $\SimConfinedTrustedMin$ it is, and the budget
rises from $\SimConfinedQAgnostic$ to $\SimConfinedQConfined$: every
handset-derived channel in the deployment may be corrupt at once, and the
monitor still evaluates its predicate on the true state. Confinement helps
in $\SimConfinedHelps$ of $\SimConfinedRows$ configurations, which is what
makes the sweep a measurement rather than a demonstration.

Two consequences deserve to be separated, because they point in opposite
directions.

The first is a design rule with a number attached. \textbf{Three
uninfluenceable counters are worth more on this topology than any amount of
handset-derived measurement}, and the reason is structural rather than
statistical: they move the deployment across the threshold of
Proposition~\ref{prop:confined}(ii), after which the adversary's budget is
no longer bounded by observability. An operator specifying an energy-saving
or load-balancing loop can check this before deployment, from the
deployment graph and the measurement map alone, and it is the cheapest
intervention available.

The second is a caution about our own configuration, and it is why
$q^\star = \SimQStar$ appears throughout this paper. \textbf{The reference
plant does not satisfy Corollary~\ref{cor:design}}: with
$\SimConfinedRefNet$ network-derived channels the trusted family leaves the
state unobservable, so the architecture is relying on handset-derived
measurement to resolve the safety-relevant subspace, which is precisely the
dependence the corollary tells an architect to remove. We report the plant
as it is rather than choosing one that passes, because a reference
configuration that satisfies its own design rule would conceal how easily
the rule is missed.
\section{The precondition under noise}
\label{sec:robust}

Theorem~\ref{thm:safety} is an equivalence at zero measurement noise, and
Remark~\ref{rem:noise} says why it cannot be one above it. This section
replaces it with a quantitative statement that holds at every noise level,
and the replacement absorbs three other things the paper carried separately:
the margin, the inflated trigger, and the rule that a published budget must
name its floor.

Take the noise to be energy bounded over the decoder's window,
$\|v\|_2 \le \bar{v}$, rather than Gaussian. That is the honest model here.
Gaussian noise is unbounded, so no statement of the form
``$x_k \in \Sset$ for all $k$'' holds almost surely at any level of sparse
observability, which is exactly why Theorem~\ref{thm:one} had to be stated
at $\sigma = 0$.

\begin{theorem}[The robust precondition]
\label{thm:robust}
Let the monitor be the set-valued monitor of
Corollary~\ref{cor:setvalued}, let the untrusted controller be permitted to
run whenever the true safety output satisfies $\|Hx\|\le\tau_0$, and let the
trigger radius be $\tau > \tau_0$. Then under $\|v\|_2 \le \bar{v}$ and a
T2 adversary with $|\Gam| \le q$:
\begin{enumerate}
\item \emph{Safety holds unconditionally.} The true state is consistent with
the observations, so the supremum over the consistent set bounds
$\|Hx\|$ from above and the monitor cannot fail open, whatever
$(A,C)$ is.
\item \emph{Availability holds if and only if}
\begin{equation}
  \gamma^H_{2q} \;\ge\; \frac{2\bar{v}}{\tau - \tau_0}.
  \label{eq:robust}
\end{equation}
\end{enumerate}
\end{theorem}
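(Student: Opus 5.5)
Part (i) follows from Corollary~\ref{cor:setvalued} with the residual test set at exactly the energy bound. I would define the consistent set at a window as the set of states $\hat{x}$ for which some support $\hat{\Gam}$ with $|\hat{\Gam}|\le q$ and some noise $\hat v$ with $\|\hat v\|_2\le\bar v$ reproduce the observed window. Since $\|v\|_2\le\bar v$ holds deterministically, the true state with its true support $\Gam$ and true noise is always a member. So there is no false-rejection event, and the supremum of $\|H\hat{x}\|$ over the set bounds $\|Hx\|$ from above. The monitor therefore trips no later than the true safety output crosses $\tau$, and the classical argument ($\Tset\subseteq\Rset$, forward invariance of $\Rset$ under $\pi_s$) gives $x_k\in\Sset$. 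This needs no hypothesis on $(A,C)$. I would also check that, as in Table~\ref{tab:frontier}, the set is propagated to the present step, so the lag of the window does not enter.

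For part (ii), \emph{availability} means the monitor never trips while $\|Hx\|\le\tau_0$. That is the requirement that every consistent $\hat{x}$ satisfy $\|H\hat{x}\|\le\tau$ whenever $\|Hx\|\le\tau_0$, so the whole argument reduces to bounding the $H$-diameter of the consistent set.

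For sufficiency, take a consistent $\hat{x}$ with support $\hat{\Gam}$ and let $\Gam'=\Gam\cup\hat{\Gam}$, so $|\Gam'|\le 2q$. On the rows of $\Gam'^c$ both explanations are attack-free. Hence $O_{\Gam'}(x-\hat{x})=\hat v-v$ restricted to those rows, and its norm is at most $2\bar v$. Write $e=x-\hat{x}=v+w$ with $w\in\ker H$. Definition \eqref{eq:hmargin}, rescaled by homogeneity, gives $\|O_{\Gam'}(v+w)\|\ge\gamma^H_{2q}\|He\|$, because the inner minimum over $\ker H$ is below the value at the actual $w$. Therefore $\|He\|\le 2\bar v/\gamma^H_{2q}\le\tau-\tau_0$, and the triangle inequality gives $\|H\hat{x}\|\le\tau$. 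I need to make the norms consistent: the margin is stated with $\|Hv\|=1$ in an unspecified norm while the trigger uses $\|\cdot\|$. I would fix a single norm on the safety output throughout, or absorb the conversion into $\gamma^H$.

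Necessity is the main obstacle, because the worst case has to be realised, not just bounded. Suppose $\gamma^H_{2q}<2\bar v/(\tau-\tau_0)$.
\begin{enumerate}
\item Choose the minimising support $\Gam'$, split it as $\Gam\cup\hat{\Gam}$ with each part of size at most $q$, and take the minimising $e=v^\star+w^\star$, scaled so that $\|O_{\Gam'}e\|=2\bar v$. Then $\|He\|=2\bar v/\gamma^H_{2q}>\tau-\tau_0$.
\item Place the true state at $x$ with $Hx$ aligned against $He$ and $\|Hx\|=\tau_0$, so that $\hat{x}=x-e$ has $\|H\hat{x}\|>\tau$. The alignment needs care in $\|\cdot\|_\infty$; I would choose $x$ along $-He$ directly if $H$ has full row rank, and use $\ker H$ freedom otherwise.
\item Split the residual symmetrically: true noise $v=-\tfrac12 O_{\Gam'}e$ and hypothetical noise $\hat v=+\tfrac12 O_{\Gam'}e$, each of norm $\bar v$. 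The adversary on $\Gam$ cancels its half of the $\Gam'$-rows exactly as in Appendix~\ref{app:thm1}, and the remaining rows are attributed to $\hat{\Gam}$.
\item Use Assumption~\ref{asm:a3} to make the true state admissible for the untrusted controller.
\end{enumerate}
Then $\hat{x}$ is consistent and forces a trip while the true output sits at $\tau_0$, which violates availability. The delicate points are the factor of two from this symmetric split, which is what matches the constant in \eqref{eq:robust}, and the fact that the window energy constraint is met by both noise sequences simultaneously.
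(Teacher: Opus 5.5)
Your proposal is the paper's proof made explicit: the truth lies in the consistent set for (i), the union-of-supports bound $\|H\delta\|\le 2\bar v/\gamma^H_{2q}$ gives sufficiency, and a margin-attaining direction scaled to residual $2\bar v$ gives the converse, with your symmetric noise split supplying exactly what the paper's ``scaled to the admissible residual'' leaves implicit. Two repairs, neither of which changes the route: drop step 4, because Assumption~\ref{asm:a3} concerns occupying a hidden direction, is not a hypothesis of this theorem, and is unnecessary since availability already quantifies over every state with $\|Hx\|\le\tau_0$ (take $x=-\tau_0 e/\|He\|$, which also makes the alignment exact in any norm and needs no rank condition on $H$); and guard the division by $\gamma^H_{2q}$, which the paper shares, because at $\gamma^H_{2q}=0=\bar v$ the stated $\ge$ holds while a hidden direction $z$ with $Hz\neq 0$ makes the consistent set unbounded in the safety output, so the condition should be read as $\gamma^H_{2q}>0$ together with \eqref{eq:robust}.
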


\begin{IEEEproof}
Two states consistent with the same measurements differ by a $\delta$ whose
residual, on every channel outside the union of the two attack supports and
hence on all but at most $2q$ channels, is at most $2\bar{v}$ in energy. By
the definition of the margin, $\|O_{\Gam^c}\delta\| \ge \gamma^H_{2q}
\|H\delta\|$, so
\begin{equation}
  \|H\delta\| \;\le\; \frac{2\bar{v}}{\gamma^H_{2q}} \;=:\; D,
  \label{eq:diam}
\end{equation}
and $D$ is the diameter of the consistent set measured in the safety output.
For (i), the truth is one of the consistent states, so the supremum is at
least $\|Hx\|$ and the monitor is quiet only while $\|Hx\| \le \tau$. For
(ii), if $\|Hx\| \le \tau_0$ then every consistent $\hat{x}$ satisfies
$\|H\hat{x}\| \le \tau_0 + D$, which is at most $\tau$ exactly when
$D \le \tau - \tau_0$, which is \eqref{eq:robust}. Conversely if
\eqref{eq:robust} fails, a difference direction attaining the margin,
scaled to the admissible residual, produces a consistent state above $\tau$
while the truth is inside $\tau_0$, and the monitor hands over control it
did not need to.
\end{IEEEproof}

Four things follow.

\emph{Theorem~\ref{thm:safety} is the vanishing-noise limit.} As
$\bar{v} \to 0$ the right side of \eqref{eq:robust} goes to zero and the
condition becomes $\gamma^H_{2q} > 0$, which is
Theorem~\ref{thm:safety} exactly. We verify the reduction numerically at
every budget rather than assert it (\SimRobustLimitAgrees).

\emph{Safety was never what the precondition bought.} Part (i) needs no
condition at all. What fails without the precondition is availability: the
consistent set grows without bound, the supremum exceeds any trigger, and
the architecture sits in fallback permanently. That is a sharper statement
of the frontier of Corollary~\ref{cor:dichotomy} than the corollary makes.

\emph{The inflated trigger is \eqref{eq:diam}.} Writing
$\tau_{\mathrm{eff}} = \tau_0 + D$ recovers the form the companion
paper~\cite{companion:delay} derives from a Gaussian quantile; the quantile
and the energy bound are two noise models over one object.

\emph{The margin floor is the right-hand side.} ``Every published budget
must name its floor'' stops being a reporting convention and becomes a
computable quantity, $2\bar{v}/(\tau-\tau_0)$, fixed before deployment. A
budget from an exact rank test is a budget quoted at floor zero, and
\eqref{eq:robust} says that is never the operative floor at any nonzero
noise. This is the theorem \S\ref{sec:identified} was reaching for.

\begin{figure}[!t]
\centering
\includegraphics[width=0.86\columnwidth]{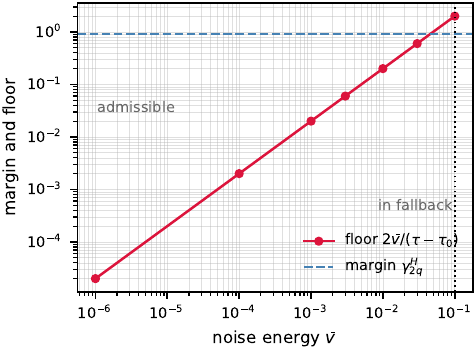}
\caption{The robust precondition \eqref{eq:robust} as a crossing. The floor
$2\bar{v}/(\tau-\tau_0)$ rises with the noise while the margin does not, and
where they cross the architecture stops being able to run the learned
controller at all. The condition is not a property of the plant alone: it is
a comparison between the plant and the noise the deployment actually has.}
\label{fig:robust}
\end{figure}

\begin{table}[!t]
\caption{The crossing of Fig.~\ref{fig:robust} in numbers, on the reference
plant
($\gamma^H_{2q} = \SimRobustGamma$, $\tau = \SimRobustTau$,
$\tau_0 = \SimRobustTauZero$). The floor rises with the noise and the
configuration crosses, which is what makes the sweep a measurement.}
\label{tab:robust}
\centering
\small
\begin{tabular}{ccccc}
\toprule
$\bar{v}$ & floor & $\gamma^H_{2q}$ & diameter & admissible? \\
\midrule
$1.0\times 10^{-6}$ & 0.0000 & 0.913 & 0.0000 & yes \\
$1.0\times 10^{-4}$ & 0.0020 & 0.913 & 0.0002 & yes \\
$1.0\times 10^{-3}$ & 0.0200 & 0.913 & 0.0022 & yes \\
$3.0\times 10^{-3}$ & 0.0600 & 0.913 & 0.0066 & yes \\
$1.0\times 10^{-2}$ & 0.2000 & 0.913 & 0.0219 & yes \\
$3.0\times 10^{-2}$ & 0.6000 & 0.913 & 0.0657 & yes \\
$1.0\times 10^{-1}$ & 2.0000 & 0.913 & 0.2192 & \textbf{no} \\
\bottomrule
\end{tabular}

\end{table}

Table~\ref{tab:robust} sweeps the noise. The diameter bound
\eqref{eq:diam} is searched against sampled worst-case difference
directions rather than trusted: the largest ratio of achieved to bounded
extent is $\SimRobustDiamRatio$, so the bound holds and is close to tight.
Converting to the infinity norm the trigger actually uses costs at most a
factor of $\SimRobustInfConv$, in the direction that enlarges the trigger,
which is the conservative one.

\subsection{Deciding the condition without enumerating supports}
\label{sec:spectral}

Definition~\ref{def:hobs} quantifies over every channel subset of size $k$,
and every result above was computed by enumerating them. That is
$\binom{p}{k}$ observability tests, and it is the single choice with two
costs: it makes the decoder unimplementable in a near-real-time budget, and it
confines every sweep here and in~\cite{companion:delay} to a six-cell ring.

It is avoidable. Write
\begin{equation}
  \mathrm{supp}_{\mathrm{out}}(z)
    = \{\, m : (C A^j z)_m \neq 0 \text{ for some } j \ge 0 \,\}
  \label{eq:outsupp}
\end{equation}
for the channels on which a direction is ever visible. Deleting a set $\Gam$
destroys observability exactly when $\Gam$ contains
$\mathrm{supp}_{\mathrm{out}}(z)$ for some $z \neq 0$, so

\begin{proposition}[The condition is one minimisation]
\label{prop:minsupp}
$(A,C)$ is $k$-sparse observable if and only if
$\min_{z \neq 0} |\mathrm{supp}_{\mathrm{out}}(z)| > k$, and is $H$-sparse
observable at level $k$ if and only if the same minimum, taken over $z$ with
$Hz \neq 0$, exceeds $k$. The minimiser's support is the channel set an
attacker should buy first.
\end{proposition}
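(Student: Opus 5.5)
The plan is to reduce both claims to one identity about unobservable subspaces: for any channel set $\Gam$,
\[
\mathcal{N}(A, C_{\Gam^c}) = \{\, z : \mathrm{supp}_{\mathrm{out}}(z) \subseteq \Gam \,\}.
\]
Proving it is a matter of unpacking definitions. A vector $z$ lies in the unobservable subspace of $(A, C_{\Gam^c})$ exactly when $(C A^j z)_m = 0$ for every $m \notin \Gam$ and every $j \ge 0$. By \eqref{eq:outsupp}, that is the same as saying every channel on which $z$ is ever visible lies in $\Gam$.

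\emph{The range of $j$.} Definition~\ref{def:margin} uses the observability matrix only up to $j = n-1$, while \eqref{eq:outsupp} quantifies over all $j \ge 0$. Cayley--Hamilton closes the gap: $A^j$ for $j \ge n$ is a linear combination of $I, A, \dots, A^{n-1}$. So vanishing for $j \le n-1$ implies vanishing for all $j$, and the two supports coincide. This is the only step that needs anything beyond bookkeeping.

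\emph{The $k$-sparse statement.} With the identity, $(A,C)$ fails to be $k$-sparse observable iff some $|\Gam| = k$ admits $z \neq 0$ with $\mathrm{supp}_{\mathrm{out}}(z) \subseteq \Gam$. I will show this happens iff some $z \neq 0$ has $|\mathrm{supp}_{\mathrm{out}}(z)| \le k$.
\begin{itemize}
\item Forward direction: the containment gives the size bound immediately.
\item Reverse direction: enlarge $\mathrm{supp}_{\mathrm{out}}(z)$ to any set of size exactly $k$. This needs $k \le p$; the case $k > p$ is degenerate and handled by convention. Enlarging is harmless because removing more rows only enlarges the unobservable subspace.
\end{itemize}
Negating both sides gives $\min_{z\neq0}|\mathrm{supp}_{\mathrm{out}}(z)| > k$. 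The minimum is attained because supports take finitely many values.

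\emph{The $H$-sparse statement.} The same argument runs with the extra condition $Hz \neq 0$. Definition~\ref{def:hobs} fails iff some $|\Gam| = k$ has an element $z \in \mathcal{N}(A, C_{\Gam^c})$ outside $\ker H$, that is, with $Hz \neq 0$. The identity converts this to $\mathrm{supp}_{\mathrm{out}}(z) \subseteq \Gam$, and padding as before gives the restricted minimum. When no $z$ has $Hz \neq 0$, i.e.\ $H = 0$, the minimum is $+\infty$ and the condition holds trivially.

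\emph{The attacker remark.} If $z^\ast$ is a minimiser, then $\Gam^\ast = \mathrm{supp}_{\mathrm{out}}(z^\ast)$ is a smallest channel set whose deletion hides a safety-relevant direction. By the construction of Appendix~\ref{app:thm1}, splitting $\Gam^\ast$ in half gives the supports an adversary should buy.

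I expect no real obstacle. The only care points are the Cayley--Hamilton truncation and the padding step when the support is smaller than $k$.
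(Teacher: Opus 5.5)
Your proof is correct and takes the paper's route. The paper justifies the proposition by a single sentence: deleting $\Gam$ destroys observability exactly when $\Gam$ contains $\mathrm{supp}_{\mathrm{out}}(z)$ for some $z \neq 0$. That sentence is your identity $\mathcal{N}(A, C_{\Gam^c}) = \{z : \mathrm{supp}_{\mathrm{out}}(z) \subseteq \Gam\}$, and your Cayley--Hamilton truncation and padding step (with its $k \le p$ caveat) supply the details the paper leaves implicit.
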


There is no enumeration in that statement, and the spectrum evaluates it.
Let $V$ span an eigenspace of $A$ with eigenvalue $\lambda$ and dimension
$d$. For $z = Vc$ a channel $m$ is invisible exactly when a $d$-row block
$B_m c$ vanishes, because the sequences $j \mapsto \binom{j}{i}
\lambda^{j-i}$ are linearly independent. Across eigenspaces nothing is
gained: for $z$ spanning several eigenvalues a Vandermonde argument forces
each component's output to vanish separately, so
$\mathrm{supp}_{\mathrm{out}}$ is the union of the components' supports and
is never smaller than the smallest. Hence
\begin{equation}
  \min_{z \neq 0} |\mathrm{supp}_{\mathrm{out}}(z)|
    = \min_{\text{eigenspaces}} \ \min_{c \neq 0}
      \bigl|\{\, m : B_m c \neq 0 \,\}\bigr|,
  \label{eq:spectral}
\end{equation}
and the inner minimum is attained at a $c$ annihilated by at most $d-1$
channel blocks, so it costs $O(p^{d-1})$ rather than $\binom{p}{k}$.

\emph{The saving is governed by the largest eigenvalue multiplicity, not by
$p$.} For a generic deployment graph $d = 1$ and the cost is linear in the
channel count. On a symmetric ring $d = 2$, and the degeneracy that makes it
two is the same degeneracy that creates the hidden directions. On the
directed chain of~\cite{companion:delay}, whose state matrix has a constant
diagonal and therefore one Jordan block, $d = n$ and the method degenerates
to the enumeration it replaces. We report that rather than claim a universal
speedup, and the implementation refuses the spectral route and calls the
enumeration when the eigenvector basis is too ill-conditioned to trust.

Every answer here is checked against the enumeration on every configuration
where the enumeration still runs, in the boolean and in the margin, and they
agree in all $\SimScaleCompared$ of them (\SimScaleAgree). Beyond that range
the reach is what matters: a $\SimScaleLargestCells$-cell deployment with
$\SimScaleLargestChannels$ channels is decided in
$\SimScaleLargestSeconds$\,s, giving $q^\star = \SimScaleLargestQ$, where the
enumeration would need about $\SimScaleEnumSubsets$ subsets.

\begin{figure}[!t]
\centering
\includegraphics[width=\columnwidth]{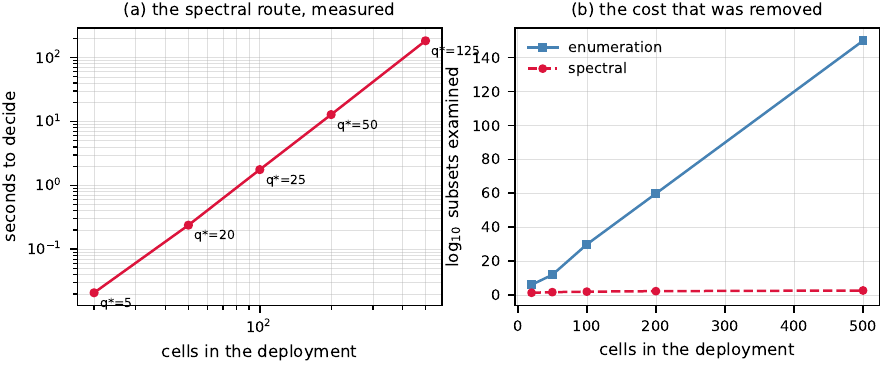}
\caption{Deployments the enumeration cannot reach. (a) Time to decide the
condition exactly, against deployment size, with the tolerable budget
annotated. (b) The cost that was removed: subsets the enumeration must
examine against the eigenspaces the spectral route examines, on a decimal
logarithm. The last row of Table~\ref{tab:scale} would need about
$\SimScaleEnumSubsets$ observability tests by enumeration and takes
$\SimScaleLargestSeconds$\,s here.}
\label{fig:scale}
\end{figure}

\begin{table}[!t]
\caption{The same measurements as Fig.~\ref{fig:scale}(a), for a reader who
wants the numbers.}
\label{tab:scale}
\centering
\small
\begin{tabular}{lccr}
\toprule
cells & channels & $q^\star$ & spectral (s) \\
\midrule
20 & 23 & 5 & 0.02 \\
50 & 53 & 20 & 0.24 \\
100 & 103 & 25 & 1.76 \\
200 & 203 & 50 & 12.83 \\
500 & 503 & 125 & 183.50 \\
\bottomrule
\end{tabular}

\end{table}

This connects the condition to an existing literature rather than only
accelerating it. Equation~\eqref{eq:spectral} is a sparse-support statement
about functional observability, for which eigenvalue tests and graph
algorithms already exist~\cite{montanari2022functional,montanari2025pbh};
what is new here is the quantification over attack supports, and it is that
quantification the reformulation removes.

\subsection{The headline numbers, as distributions}
\label{sec:ensemble}

With the enumeration gone, the single-topology objection can be answered
instead of conceded. We repeat the safety-map measurement over
$\SimEnsembleConfigs$ configurations spanning $\SimEnsembleFamilies$
families: symmetric rings, directed chains, and random geometric graphs
built from cell positions, which is what a deployment graph actually is.

The gain generalises, and not uniformly (Fig.~\ref{fig:ensemble} and
Table~\ref{tab:ensemble}). Stating the safety condition over
total offered load raises the tolerable budget in
$\SimEnsembleAggPositivePct$ per cent of configurations, with a median gain
of $\SimEnsembleGainAggMedian$ and a maximum of
$\SimEnsembleGainAggMax$; a per-cell condition raises it in
$\SimEnsembleCellPositivePct$ per cent with a median gain of
$\SimEnsembleGainCellMedian$. On random geometric graphs some gain appears
in $\SimEnsembleRandomPct$ per cent of cases. On the directed chain it
appears in $\SimEnsembleChainPct$ per cent: the chain is the family where
stating the safety condition more narrowly buys nothing at all, and its
presence is what makes this a measurement rather than a demonstration.

\begin{figure}[!t]
\centering
\includegraphics[width=\columnwidth]{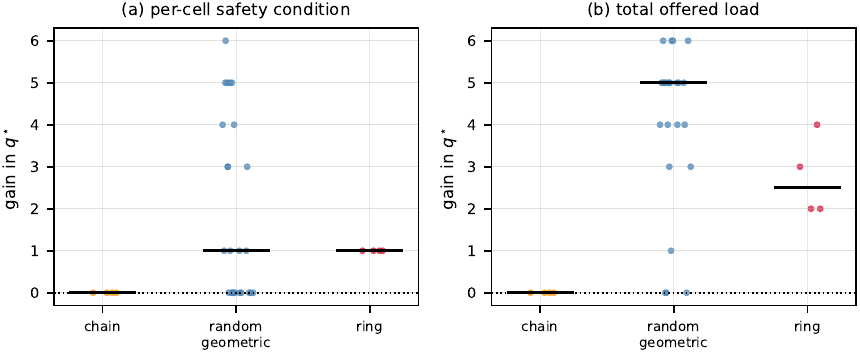}
\caption{The safety-map gain as a distribution rather than a value, over
$\SimEnsembleConfigs$ configurations in three topology families. Each point
is one deployment and the bar is the family median. (a) A per-cell outage
condition. (b) One stated over total offered load. The chain family is the
control: a topology on which the relaxation buys nothing, without which the
sweep would be reporting itself.}
\label{fig:ensemble}
\end{figure}

\begin{table}[!t]
\caption{The distribution of Fig.~\ref{fig:ensemble}, summarised by family.}
\label{tab:ensemble}
\centering
\small
\begin{tabular}{lrccc}
\toprule
family & $n$ & per-cell & aggregate & any gain \\
\midrule
chain & 4 & 0 & 0 & 0\% \\
random geometric & 24 & 1 & 5 & 92\% \\
ring & 4 & 1 & 2 & 100\% \\
\bottomrule
\end{tabular}

\end{table}

\subsection{Influence, in place of a trust flag}
\label{sec:influence}

The division of \S\ref{sec:families} is binary: a channel is forgeable or it
is not. That is the most consequential simplification in the model, and
refining it with a coefficient invented to make a table come out would be
worse than a flag that is honestly crude. A refinement is available without
inventing anything, because the endpoints pin it.

Give channel $m$ an influence coefficient $b_m \ge 0$ bounding how far the
adversary can move it. Then $b_m = 0$ is an uninfluenceable counter,
$b_m = \infty$ is a forgeable channel, and $0 < b_m < \infty$ is the case the
binary split cannot express at all: an aggregate computed honestly from
inputs the adversary chose, which is neither forgeable nor trustworthy.
Channels with $b_m = \infty$ constrain nothing and therefore behave exactly
as deleted rows, while the finite ones contribute their budget to what is
formally extra noise, so \eqref{eq:diam} becomes
\begin{equation}
  \|H\delta\| \;\le\;
    \frac{2\,\|b_{\mathrm{fin}} + v\|_2}{\gamma^H(\Gam_\infty)},
  \label{eq:influence}
\end{equation}
with $\Gam_\infty$ the forgeable set. At a two-point profile
\eqref{eq:influence} is \eqref{eq:diam}, so the binary model is the
restriction of this one rather than an approximation to it; we check that
recovery at $\SimInflControls$ profiles and it holds at every one
(\SimInflRecovers).

Table~\ref{tab:influence} sweeps the graded case. What changes
operationally is the question an operator asks. The budget stops being an
integer count of channels and becomes a bound on a vector: on the reference
plant the architecture tolerates a uniform influence of
$\SimInflBudget$ across every handset-derived channel, and fails at
$\SimInflBad$. That is a specification an operator can write down --- how
far a reported aggregate may be pushed --- where ``how many channels may be
corrupt'' was not.

\begin{table}[!t]
\caption{Graded influence on the handset-derived channels. The first row is
the uninfluenceable case and the last is the forgeable one; the rows between
are what the binary split cannot express.}
\label{tab:influence}
\centering
\small
\begin{tabular}{cccc}
\toprule
UE influence $b$ & $\gamma^H$ & diameter & admissible? \\
\midrule
0 & 6.360 & 0.0009 & yes \\
$1.0\times 10^{-4}$ & 6.360 & 0.0010 & yes \\
$1.0\times 10^{-3}$ & 6.360 & 0.0016 & yes \\
$1.0\times 10^{-2}$ & 6.360 & 0.0085 & yes \\
$1.0\times 10^{-1}$ & 6.360 & 0.0778 & yes \\
$1.0\times 10^{0}$ & 6.360 & 0.7711 & \textbf{no} \\
\bottomrule
\end{tabular}

\end{table}

\subsection{Placement, as an optimisation with a guarantee}
\label{sec:placement}

Corollary~\ref{cor:design} says trusted measurement must render the
safety-relevant subspace observable on its own, and \S\ref{sec:confined}
makes that the confinement threshold. Neither answers the question an
operator asks, which is where the next counter goes.

By \eqref{eq:spectral}, $(A, C_T)$ is observable exactly when no
eigen-direction is invisible to $T$, so a trusted set is feasible precisely
when it intersects $\mathrm{supp}(Cv)$ for every eigen-direction $v$.
Crossing the confinement threshold at least cost is therefore a
\emph{minimum hitting set} problem. That is NP-hard in general, which is
itself the explanation for the surprise this paper reported: intuition about
spreading sensors out fails because the problem is combinatorial coverage
and not geometry.

The coverage function is monotone and submodular, so greedy carries the
standard $1 + \ln n$ guarantee. We check the submodularity rather than cite
it --- $\SimPlaceSubPairs$ sampled pairs, no violation (\SimPlaceSubmodular)
--- and score greedy against the exhaustive optimum on every row reported
rather than against the bound. Greedy attains the optimum on all
$\SimPlaceRows$ configurations, at a worst ratio of
$\SimPlaceWorstRatio$, selecting $\SimPlaceGreedySize$ counters where the
optimum needs $\SimPlaceOptSize$.

The optimiser also explains the paper's own counterintuitive measurement.
Two \emph{adjacent} trusted counters beat two spread ones on this ring
because adjacency is what the hitting-set optimum selects here, not because
adjacency is good. On a different graph the optimum sits elsewhere, which is
why the finding must be computed and never generalised.

\section{Detection scaling}
\label{sec:refuted}

Write $\Delta_{\min}$ for the smallest state deviation along the least
observable direction that the monitor detects with a miss rate of at most
$0.05$ at a false-alarm rate of $0.05$. We predicted that $\Delta_{\min}$
would scale as $\sigma/\gamma_{2q}$, so that the product
$\Delta_{\min}\gamma_{2q}$ would be constant across configurations. Measured
that way it is not: the values are $\{\SimDeltaMinProducts\}$, a coefficient
of variation of $\SimDeltaMinCV$. The apparent failure is a property of the
measurement rather than of the plant, and two properties of the measurement
account for it.

The first is that $\gamma_{2q}$ is not the margin this statistic sees. The
monitor thresholds the decoded infinity norm and the decoder removes $q$
channels, so the spread of that statistic under the null is governed by the
conditioning of the reduced observability map at a $q$-removal, which we
write $\gamma_q$. The quantity $\gamma_{2q}$ governs something else: whether
two attack hypotheses are separable at all, which is the subject of
Theorem~\ref{thm:one}. Pairing a detection threshold against it pairs a
statistic with a margin its estimator never uses. Across our configurations
$\gamma_q$ ranges only over $[\SimGammaQLow, \SimGammaQHigh]$ while
$\gamma_{2q}$ spans a factor of $\SimGammaSpan$, which is the whole of the
apparent saturation.

The second is that the statistic responds to $\Delta\|z\|_\infty$ rather than
to $\Delta$. The least observable direction is a unit vector in the Euclidean
norm, and how much of it any one coordinate carries varies between
configurations, so two systems with identical margins and differently shaped
hidden directions give different $\Delta_{\min}$ for a reason that has
nothing to do with observability.

Correcting both, the product $\Delta_{\min}\gamma_q\|z\|_\infty$ takes the
values $\{\SimCorrectedProducts\}$, a coefficient of variation of
$\SimDeltaMinCVCorrected$ against $\SimDeltaMinCV$ for the pairing with
$\gamma_{2q}$, a factor of $\SimCVImprovement$. The scaling law holds; the
apparent failure was the choice of denominator.

We report the mis-specified pairing beside the corrected one because a
prediction that fails should be checked against the estimator that measured
it before the failure is attributed to the system, and the specific error
here -- pairing a detection threshold with a margin its estimator does not
use -- is one that any similar evaluation can make.

$\Delta_{\min}$ is located by bisection to a relative resolution of
$\SimDeltaMinRes$ per cent, with the median over three seeds whose spread
does not exceed $\SimDeltaMinSeedSpread$ per cent, and the null hypothesis
carries the masking signal so that the test separates deviation size
rather than the presence of an attack. A coarse geometric grid ($28$ per
cent spacing) or a null drawn without the attacker would each manufacture
part of the apparent flatness on its own.

\subsection{A monitor that reads only messages}
\label{sec:trust}

A live proposal for governing learned agents in the radio access
network~\cite{chergui2026agents} scores each agent by how well its asserted
belief matches what its peers expect it to believe. The design advice
attached to that proposal is correct, and it is Corollary~\ref{cor:design}
stated informally: at least one agent per loop should be anchored to
verifiable hardware counters, to close the loop against physical ground
truth. What follows is the matching negative result, and it bounds a class
rather than criticising one construction.

\begin{proposition}[Message-based trust has budget zero]
\label{prop:trust}
Let a monitor's statistic be a function of agent messages and of models of
those agents, and of nothing else. Then every input it reads lies inside the
adversary's reach, so it has no network-derived channel,
Corollary~\ref{cor:design} gives it $q^\star = 0$, and
Corollary~\ref{cor:dichotomy} applies to it unconditionally.
\end{proposition}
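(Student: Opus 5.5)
The plan is to cast the message-based monitor as an instance of the measurement model \eqref{eq:meas}, in which every row of the effective $C$ belongs to the untrusted family $U$, and then read the conclusion off Proposition~\ref{prop:confined}, Corollary~\ref{cor:design} and Corollary~\ref{cor:dichotomy}.

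First, I will formalise the monitor's input. Each agent message is treated as one channel $m$, with content $y_{k,m} = c_m^\top x_k + a_{k,m}$. Here $c_m^\top x_k$ is what an honest agent, described by the monitor's model of it, would assert. The models of those agents are fixed functions known to everyone, so they add no channel: any statistic computed from messages and models is a function $\phi(y_{0:k})$ of the message channels alone. The key modelling step is that an agent message is authored by the agent. Under T2, which influences the untrusted controller and, in the multi-agent setting, the agents hosted beside it, the message content is writable in the same sense as a UE-derived channel. The trusted family is therefore empty, $U^c = \emptyset$, and $C_{U^c}$ has no rows.

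Second, I will compute the budget. If $U^c=\emptyset$, then $(A, C_{U^c})$ has unobservable subspace $\mathbb{R}^n$. Corollary~\ref{cor:design} then fails for every nontrivial safety map $H$. To get a number rather than the corollary's qualitative statement, I will show that $q^\star = 0$ directly, in the sense that no budget $q \ge 1$ is tolerated. Taking $q=1$ alone does not suffice, because $2q$-sparse observability with $2q < p$ could in principle hold. So I will argue through confinement instead. In this setting the adversary's reach is the entire family, so the relevant support is not "at most $q$ of $U$" but $U$ itself whenever the agents are compromised. I expect this step to be the main obstacle: the proposition's "budget zero" must be read as saying that the adversary's reach is the whole input set. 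With that reading, the union of the two candidate supports is $U$ itself, and the condition of Proposition~\ref{prop:confined}(i) becomes observability of $(A, C_{\emptyset})$, which fails. Any $z$ with $Hz\ne 0$ then lies in $\mathcal{N}(A,C_{U^c})$, and Theorem~\ref{thm:safety}'s necessity construction applies with the benign trajectory's messages reproduced exactly. The adversary owns every channel, so here it can even reproduce the benign sequence outright, without the $2q$ splitting argument.

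Third, I will invoke Corollary~\ref{cor:dichotomy}. The consistent set contains a safe and an unsafe state, and no restriction on $(A,C)$, margin or model accuracy removes this, because the obstruction is an identical message sequence. The corollary therefore holds unconditionally. This also shows the monitor's accuracy on honest faults is irrelevant: $\phi$ is evaluated on identical inputs under both hypotheses. Assumption~\ref{asm:a3} is needed only to make the unsafe candidate occupiable, and I will note that it is inherited from the construction.
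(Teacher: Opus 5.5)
Your proposal is correct and follows the same chain the paper puts into the statement itself. Every input lying in the adversary's reach leaves the trusted family empty, so the requirement of Corollary~\ref{cor:design} (Proposition~\ref{prop:confined} with $U^c=\emptyset$) cannot be met, and the indistinguishability construction of Appendix~\ref{app:thm1} then makes Corollary~\ref{cor:dichotomy} hold whatever $(A,C)$ is. Your caution that $q^\star=0$ does not follow from the literal $2q$-sparse-observability definition is well placed. The reading you settle on, that the adversary may own every input at once, is exactly the premise of Corollary~\ref{cor:design} on which the paper's claim rests.
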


We implemented such a metric over $\SimTrustAgents$ agents subscribed to
overlapping channel subsets and ran the construction of \S\ref{sec:attack}
against it. The monitor reads $\SimTrustNetChannels$ network-derived
channels. Every agent reports the correct inference from what it saw and
nobody misreports, so every pairwise divergence stays at the floor: the
lowest trust weight over the whole run is $\SimTrustMinWeight$ while the
state leaves the safe set at step $\SimTrustEscapeStep$ and reaches
$\SimTrustPeak$.

The control matters more than the demonstration. The same metric does catch
an agent that is simply wrong: perturbing one agent's report drops its weight
from $\SimTrustWeightConsistent$ to $\SimTrustWeightWrong$
($\SimTrustControlOk$). The implementation is not inert, and the result is
therefore about the attack rather than about the code. Consistency is not
correctness, and a trust score rewards exactly the property this construction
manufactures.

\subsection{The budget is not robust to identification}
\label{sec:identified}

$q^\star$ is the number this architecture is sold on, and it is computed by
an exact rank test. Exact rank deficiency is \emph{not generic}: the hidden
subspaces on a symmetric ring exist because of the symmetry, and any
perturbation destroys them.

Fitting $(A,C)$ from a $\SimIdSamples$-sample trace of the reference plant's
own output, at a relative state-matrix error of $\SimIdErrA$, moves the
reported budget from $\SimBudgetExactTrue$ to $\SimBudgetExactFitted$, and it
stays moved out to fifteen per cent error. The margin does not move: at the
same fit $\gamma_{2q}$ goes from $\SimMarginTrue$ to $\SimMarginFitted$.

\begin{figure*}[!t]
\centering
\includegraphics[width=\textwidth]{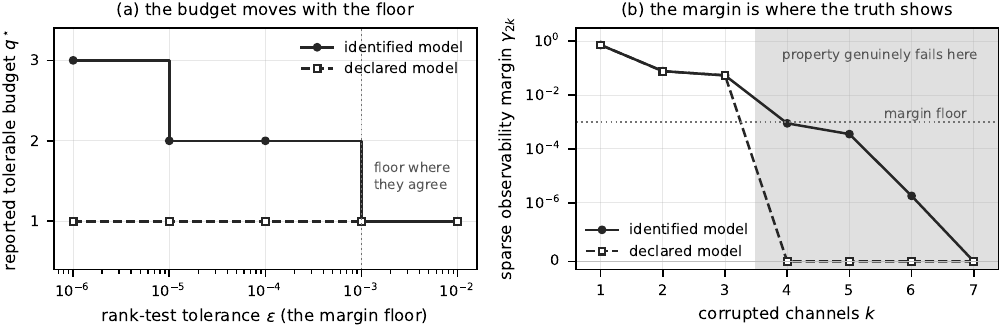}
\caption{Why a budget without a floor is not reproducible. (a) The
tolerable budget the exact rank test reports on a model fitted from the
plant's own output, against the tolerance at which the rank is taken. The
declared model answers $\SimBudgetFloorTrue$ at every tolerance. The
fitted model answers $\SimBudgetExactFitted$ at the tightest one, falls to
two across the middle of the range, and reaches the declared answer only at
$\SimMarginFloor$. The reported budget is thus a property of
the floor as much as of the plant. (b) The same two models read through the
margin instead. Where the property genuinely holds the two agree to
\SimIdMarginAgreement\% and the curves are indistinguishable;
where it genuinely fails the declared margin is exactly zero, plotted on
its own tick below the decades rather than clamped to a small positive
value it does not take. The fitted model reports a positive margin for all
$\SimIdSpuriousK$ of those channel counts, the largest being
$\SimIdWorstSpurious$ --- false assurances that a floor at
$\SimMarginFloor$ rejects and a floor at $10^{-6}$ accepts.}
\label{fig:identified}
\end{figure*}

Fig.~\ref{fig:identified} separates the two readings. The distinction
is not a matter of numerical hygiene. A rank test answers a question about
an exact algebraic property, and a model fitted from data has that property
with probability zero, so the test answers about the fit rather than about
the plant. The margin answers a question about distance, which survives
perturbation. Panel~(b) shows the consequence: the fitted model reports a
positive margin at $\SimIdSpuriousK$ channel counts where the true margin is exactly
zero, and the only thing separating those from the genuine ones is where
the floor is placed.

The operational quantity is therefore the budget against a stated margin
floor, not against a rank test. A direction observable with margin $10^{-9}$
is hidden underneath any real measurement noise, and an adversary does not
care that it is technically observable. Taking the floor at
$\SimMarginFloor$, the fitted model recovers $\SimBudgetFloorFitted$ against
the true $\SimBudgetFloorTrue$. \textbf{Every published budget must name the
floor it was computed at, or it is not reproducible on a fitted model.} This
is a correction to the central computable quantity of \S\ref{sec:thm1} and it
applies to every table in this paper.

\section{The truthful adversary, run}
\label{sec:t1}

Tier T1 is the adversary that makes this problem worth posing and no result
above is about it. It owns devices, induces real conditions and forges
nothing, so the monitor reads the truth and the architecture is said to
contain it. That claim has never been tested, and it is the experiment a
security reviewer asks for first.

We run it on the event-driven queueing loop of~\cite{companion:delay}, where
sessions are integers, arrivals are Poisson and cells block above capacity,
with the adversary's devices offering genuine additional load in one cell.

The claim holds, with a bound. The architecture holds the safe set for
induced loads up to $\SimNLTOneHeld$ additional sessions per period and
loses it at $\SimNLTOneLost$. \emph{The loss is not a monitoring failure.}
The monitor sees the truth throughout and trips correctly; what fails is the
fallback's authority against a load large enough to exceed it. That is a
capacity limit of the verified controller, which is a different and much
better-understood problem than an undetectable one, and it is exactly what
the papers claim when they say the architecture contains T1.

Reporting the bound matters more than reporting the claim. ``Contained''
without a number invites a reader to assume containment at any intensity,
and the architecture does not offer that.

\section{Falsifiability contracts}
\label{sec:falsifiers}

Each result is stated with the observation that would refute it, and each
contract is executed rather than asserted. Reporting that nothing was
observed is weaker than reporting how hard the search looked, so every
contract below corresponds to a search for its own counterexample, and the
budget spent failing to find one is quoted with the status. Each search
additionally runs a \emph{planted} counterexample, so a contract that could
not fail is itself caught.

One contract is not directly searchable and is marked as such. The
Theorem~\ref{thm:one} falsifier quantifies over all monitors, so what is
searched instead is the agreement between the threshold and the
construction: a configuration in which no hidden direction exists below the
threshold, or one exists above it, would refute the construction even if it
left the theorem standing.

\textit{Theorem~\ref{thm:one}.} Exhibit a plant that is not $2q$-sparse
observable together with a monitor reading only those channels that
nonetheless preserves safety against every $q$-sparse attack. Present status:
across $42$ configurations spanning three cell counts and five
trusted-channel counts, the threshold and the construction agree in every
one, and for the configurations tested the construction produces two
trajectories whose measurement sequences agree to within $\SimLeakage$ of
their on-support magnitude on every channel the adversary does not own.

\textit{Theorem~\ref{thm:safety}.} Exhibit a plant that is $H$-sparse
observable at level $2q$ and on which a $q$-sparse attack nonetheless drives
the safety output across its boundary undetected; or, in the other
direction, a safety map at which the condition disagrees with
Definition~\ref{def:sparseobs} when $H = I$. Present status: agreement at
$H = I$ holds at every level in both the boolean and the margin, and
$\SimSafetyWitnesses$ of $\SimSafetyMaps$ maps are witnesses that the
condition is strictly weaker.

\textit{Corollary~\ref{cor:setvalued}.} Exhibit a consistent candidate set
and a true state outside the trigger set on which the set-valued monitor
stays quiet. Present status: $4{,}000$ candidate sets, no fail-open. The
planted case is a monitor thresholding the mean of the set rather than its
supremum, which does fail open on the same draws.

\textit{The narrowed budget claim of \S\ref{sec:identified}.} Exhibit a fit
at which the exact rank test and the margin move together, which would mean
the correction is unnecessary. Present status: at a relative state-matrix
error of $\SimIdErrA$ the rank test moves from $\SimBudgetExactTrue$ to
$\SimBudgetExactFitted$ while the margin moves from $\SimMarginTrue$ to
$\SimMarginFitted$. The planted case is the same test at a floor of zero,
where the two must and do disagree.

\textit{The detection scaling law of \S\ref{sec:refuted}.} Exhibit a
configuration in which $\Delta_{\min}\gamma_q\|z\|_\infty$ departs materially
from the value it takes elsewhere, which would refute the corrected law as
the original pairing refuted the first one. Present status: the product is
$\{\SimCorrectedProducts\}$ across a $\SimGammaSpan$-fold span of
$\gamma_{2q}$, a coefficient of variation of $\SimDeltaMinCVCorrected$.

\textit{Proposition~\ref{prop:trust}.} Exhibit a monitor reading only agent
messages that nonetheless detects the construction of \S\ref{sec:attack}.
Present status: over the whole run the lowest trust weight any agent
receives is $\SimTrustMinWeight$ while the state leaves the safe set. The
planted case is an agent that is simply wrong rather than consistently
deceived, whose weight the same metric does drop, so the result is about the
attack and not about an inert implementation.
\section{Limitations}

The model is a linearisation of a loop that is nonlinear, event-driven and
stochastic in its arrivals. All three results here are structural
properties --- observability, distinguishability and reachability of a
hidden direction --- and structural properties of a linearisation are the
usual currency of control-theoretic security arguments. No quantitative
performance figure survives the objection, which is why we report none.

The decoder used in our study enumerates candidate attack supports and is
exponential in the channel count. That single choice costs us twice. It is
not implementable within a near-real-time budget as written, so our results
concern what is recoverable rather than what is fast, and conflating the two
would be claiming an engineering result we have not produced; convex and
event-triggered alternatives~\cite{shoukry2016event,shoukry2017smt} exist
and would need separate treatment. It also confines every sweep here to
small deployments, which is why the budgets, the threshold at three trusted
counters and the factor of three are all measured on one six-cell ring. The
decision procedure of \S\ref{sec:spectral}, built on the eigenstructure as
the tests of~\cite{montanari2025pbh,montanari2022functional} are, is not
exponential and lifts that limit for deciding the budget; the tables were
computed by enumeration, so every table here should be read as a
small-deployment measurement, and it is the mechanisms rather than the
integers that we claim.

Assumption~\ref{asm:a3} holds trivially wherever the input matrix is a
scaled identity. Table~\ref{tab:a3} exercises the realistic under-actuated
case on one ring, and where the assumption fails the converse has no
construction and the deployment is safe for a reason this analysis does not
model. A general characterisation of which under-actuated deployments
satisfy it is not given.

The corrupted channel set is fixed over time, following the convention in
secure state estimation. That convention is load bearing rather than
cosmetic: an adversary that rotates which channels it corrupts is treated in
the companion paper~\cite{companion:delay}, where it caps the observation
window structurally and defeats configurations that survive a static
adversary.

Two omissions in the adversary model pull in opposite directions. A real
attacker must keep an observable performance indicator healthy or operations
notices, which would slow the attack and which we do not model, so ours is
faster than a realistic one. Against that, the main construction assumes
exact knowledge of $(A,C)$; \S\ref{sec:attack} measures the degradation and
finds model uncertainty a stronger defence than a naive estimate suggests,
and we say so because it weakens our own result.

The trust split between measurement families is the most consequential
modelling choice after the plant itself, and it is probably wrong in both
directions. What reaches a near-real-time controller is a quantity computed
at the base station rather than a raw handset report, which makes our split
too pessimistic; but an aggregate computed honestly from inputs the adversary
chose is neither forgeable nor trustworthy, which the binary split cannot
express at all. Both point at the same structure, an influence
coefficient per channel rather than a trust flag. \S\ref{sec:influence}
supplies it, pinned at its two endpoints; the coefficients between them are
a property of the deployment that this analysis does not measure.

Finally, the sensor-placement finding of Corollary~\ref{cor:design} is one
topology. That two \emph{adjacent} trusted counters outperformed two spread
ones is real for a six-cell ring and must not be generalised into a placement
heuristic, since sparse observability is combinatorial and a different graph
can invert it. It is reported as evidence that intuition is unreliable here,
not as a rule.

\section{Conclusion}

Runtime assurance is the right shape for admitting learned controllers into
network control, and its guarantee is conditional on a property that its
statements assume rather than require. We call that property the assurance
precondition: sparse observability with respect to the safety-relevant
output. It is computable from the plant and its measurement
map, it yields a requirement on where trustworthy measurement must be
placed, and it is strictly weaker than observability of the full state ---
weak enough that stating an outage condition over the cells that carry the
service, rather than over every cell in the model, triples the number of
measurement channels an adversary may own.

The condition is weaker again once the adversary is placed where it
actually sits. Secure state estimation quantifies over every support
because the corrupted set is taken to be arbitrary; in this architecture it
is not, since the division between handset-derived and network-derived
measurement is fixed before deployment and is not the adversary's to
choose. Confining the support to one family turns the budget question into
a threshold question, and above the threshold observability stops bounding
the budget at all: the decoder discards the untrusted family outright and
every channel in it may be corrupt simultaneously. That gives an operator a
lever with a number attached, since crossing the threshold is a matter of
how many uninfluenceable counters the deployment carries and can be checked
from the deployment graph before anything is built. It also gives this
paper a finding about itself, because the configuration we use throughout
sits below the threshold.

Below the threshold the monitor's position is a frontier rather than a
choice between two failures. A monitor that acts on a member of the
consistent set can be steered; one that declines whenever the set disagrees
pays for the adversary's convenience; and one that acts on the whole set
does neither, because the supremum over consistent candidates bounds the true
state from above. That monitor cannot fail open at any margin or noise level.
What it cannot do is take the switch back: an adversary willing to drive the
state along a hidden direction as far as the trigger radius forces the
fallback whenever it chooses.

The same argument settles a question the architecture is about to face at
scale. Proposals to govern concurrent learned agents by scoring how well each
agent's report matches what its peers expect read no channel outside the
adversary's reach, so their tolerable budget is zero however well they
perform against honest faults. Consistency is not correctness, and an attack
that manufactures consistency is exactly what the construction here produces.

Two cautions belong with the result rather than after it. The tolerable
budget is decided by a rank test and rank deficiency is not generic, so on a
model fitted from measurements the reported budget moves threefold at a
fraction of a per cent of model error while the margin barely moves: a budget
quoted without the margin floor it was computed at is not reproducible on
anyone else's model. And a detection-scaling law that appears refuted when a
detection threshold is paired with a margin its estimator does not use holds
once the pairing is corrected.

The architecture does not eliminate the threat. It converts an undetectable
and unattributable safety failure into a bounded and observable availability
cost, and it states the rate of exchange. What that exchange costs in time,
and how many concurrent applications a controller may host before it can no
longer be paid, is the subject of the companion
paper~\cite{companion:delay}.
\appendices

\section{Proof of \texorpdfstring{Theorem~\ref{thm:one}}{Theorem 1}, necessity}
\label{app:thm1}

The sufficiency direction is given in the body. We give the converse in
full, since the construction rather than the existence argument is what
makes the result operational.

Suppose $(A,C)$ is not $2q$-sparse observable. By definition there is an
index set $\Gam'$ with $|\Gam'| = 2q$ such that $(A, C_{\Gam'^c})$ is
unobservable, so its unobservable subspace $\mathcal{N}$ is non-trivial.
Fix $z \in \mathcal{N}\setminus\{0\}$, so that
\begin{equation}
  C_{\Gam'^c} A^k z = 0 \qquad \text{for all } k \ge 0 .
  \label{eq:hidden}
\end{equation}
Two properties of $\mathcal{N}$ are used and both are standard: it is
$A$-invariant, so \eqref{eq:hidden} persists at every step rather than
holding at one instant, and it is a subspace, so $\alpha z \in \mathcal{N}$
for every scalar $\alpha$.

Partition $\Gam' = \Gam_1 \uplus \Gam_2$ with $|\Gam_i| \le q$, and let
the adversary own $\Gam_1$. Let the true state be
$x_k = x^{\mathrm{ben}}_k + \alpha A^k z$, where $x^{\mathrm{ben}}$ is a
benign trajectory remaining in $\Tset$, and let the adversary write
\begin{equation}
  a_k = -\bigl(C A^k \alpha z\bigr)\big|_{\Gam_1} .
\end{equation}
This is admissible: its support lies in $\Gam_1$, which the adversary owns.

Consider the two candidate explanations of the observed sequence. Against
the true state the residual is $a_k$, supported on $\Gam_1$. Against the
benign state it is
\begin{equation}
  C A^k \alpha z + a_k = \bigl(C A^k \alpha z\bigr)\big|_{\Gam_2},
\end{equation}
where the equality uses \eqref{eq:hidden} to discard every channel outside
$\Gam'$ and the definition of $a_k$ to cancel $\Gam_1$. Its support lies
in $\Gam_2$. Both residuals have support at most $q$, so both states are
consistent with an attack of the assumed budget, and no function of the
observed channels distinguishes them.

By Assumption~\ref{asm:a3} the untrusted controller holds the state on
$x^{\mathrm{ben}} + \alpha A^k z$. Since $\mathcal{N}$ is closed under
scaling, choose $\alpha$ with
$\|x^{\mathrm{ben}}_k + \alpha A^k z\|_\infty > x_{\max}$ while
$\|x^{\mathrm{ben}}_k\|_\infty \le \tau$. The true state leaves $\Sset$
while the benign candidate remains in $\Tset$, and the monitor has no
information with which to prefer the former. A monitor that hands over only
once the state has left $\Tset$ therefore keeps the untrusted controller in
control while the true state leaves $\Sset$. \qed

\section*{Code and data availability}

Every numeric value in this paper is generated from a recorded run rather
than typed by hand. The artifact comprises the implementation, the
experiment drivers, the recorded result files, a register of every claim
tested, including those our own data refuted, and the executable falsifier
searches whose budgets are quoted in \S\ref{sec:falsifiers}. A single
command regenerates every figure and table in this paper from the recorded
runs, and the environment is pinned in a container image.

\ArtifactTool

\ArtifactLocation

\bibliographystyle{IEEEtran}
\bibliography{refs}

\begin{IEEEbiographynophoto}{Yasser Al-Eryani}
received the Ph.D. degree in electrical and computer engineering from the
University of Manitoba, Canada. He has held research and engineering roles
in wireless systems, AI-enabled network architectures, O-RAN, and
distributed intelligent systems. His research interests include information
theory, wireless communications, continuous-aperture systems,
operator-theoretic signal models, AI-native radio access networks, and
geometry-aware communication architectures. He has authored peer-reviewed
publications and patents in wireless communication, AI systems, and network
intelligence.
\end{IEEEbiographynophoto}

\end{document}